\newtheorem{theorem}{Theorem}[section]
\newtheorem{corollary}{Corollary}[theorem]
\newtheorem{assumption}{Assumption}[theorem]
\newtheorem{rem}[theorem]{Remark}
\newtheorem{definition}{Definition}[section]
\newtheorem{proof}{Proof}[section]
\newcommand{\RomanNumeralCaps}[1]{\MakeUppercase{\romannumeral #1}}
\newcommand{\norm}[1]{\left\lVert#1\right\rVert}
\title{Multivariate General Compound Point Processes in Limit Order Books}
\author{
  Qi Guo\\
  %\thanks{Use footnote for providing further
   % information about author (webpage, alternative
    %address)---\emph{not} for acknowledging funding agencies.} \\
  Department of Mathematics and Statistics\\
  University of Calgary\\
  University Drive NW\\
  Calgary, Canada T2N 1N4\\
  \texttt{qi.guo1@ucalgary.ca} \\
  %% examples of more authors
   \And
 Bruno Remillard \\
  Department of Decision Sciences\\
  HEC Montr\'{e}al\\
  3000, chemin de la Cote-Sainte-Catherine\\
  Montr\'{e}al, Canada H3T 2A7\\
  \texttt{bruno.remillard@hec.ca} \\
  \AND
   Anatoliy Swishchuk\\
  Department of Mathematics and Statistics\\
  University of Calgary\\
  University Drive NW\\
  Calgary, Canada T2N 1N4\\
  \texttt{aswish@ucalgary.ca} \\
  %% Coauthor \\
  %% Affiliation \\
  %% Address \\
  %% \texttt{email} \\
  %% \And
  %% Coauthor \\
  %% Affiliation \\
  %% Address \\
  %% \texttt{email} \\
  %% \And
  %% Coauthor \\
  %% Affiliation \\
  %% Address \\
  %% \texttt{email} \\
}
\begin{document}
\maketitle

\begin{abstract}
In this paper, we focus on a new generalization of multivariate general compound Hawkes process (MGCHP), which we referred to as the multivariate general compound point process (MGCPP). Namely, we applied a multivariate point process to model the order flow instead of the Hawkes process. Law of large numbers (LLN) and two functional central limit theorems (FCLTs) for the MGCPP were proved in this work. Applications of the MGCPP in the limit order market were also considered. We provided numerical simulations and comparisons for the MGCPP and MGCHP by applying Google, Apple, Microsoft, Amazon, and Intel trading data.
\end{abstract}

% keywords can be removed
\keywords{Point process (PP); multivariate point processes (MPP); multivariate general compound point processes (MGCPP); limit order books (LOB); Functional Central Limit Theorems (FCLT); Law of Large Numbers (LLN)}

\section{Introduction}
In this paper we study multivariate general compound point processes to model the price processes in the limit order books (LOB). We prove a Law of Large Numbers and Functional Central Limit Theorems (FCLT) for these processes. The latter two FCLTs are applied to limit order books where we use these asymptotic methods to study the link between price volatility and order flow in our two models by using the diffusion limits of these price processes. The volatilities of price changes are expressed in terms of parameters describing the arrival rates and price changes.  

Bacry {\it et al.} (2013) proved a LLN and FCLT for multivariate HP \cite{Bacry2013}. Bowsher (2007) was the first who applied a HP and point processes to financial data modelling \cite{Bowsher2007}. Bauwens and Hautsch (2009) use a 5-D HP to estimate multivariate volatility, between five stocks, based on price intensities \cite{Bauwens2009}.  We note, that Br\'{e}maud {\it et al.} (1996) generalized the HP to its nonlinear form \cite{Bremaud1996}. Also, a functional central limit theorem for the nonlinear HP was obtained in \cite{Zhu2013}. Some applications of multivariate HP to financial data are  given in \cite{Embrechts2011}.
Vinkovskaya (2014)  considered a point process model for the dynamics of LOB, and a regime-switching HP to model its dependency on the bid-ask spread in limit order books \cite{Vinkovskaya2014}.  A semi-Markov process was applied to LOB in \cite{Swishchuk20172} to model the mid-price. We note, that a level-1 limit order books with time dependent arrival rates $\lambda(t)$ were studied in \cite{Chavez2017}, including the asymptotic distribution of the price process. General semi-Markovian models for limit order books were considered in \cite{Swishchuk20173}. The book by Cartea {\it et al.} (2015) develops models for algorithmic trading in contexts such as executing large orders, market making, trading pairs or collecting of assets, and executing in dark pool \cite{Cartea2015}. That book also contains link to a website from which many datasets from several sources can be downloaded, and MATLAB code to assist in experimentation with the data. A detailed description of the mathematical theory of Hawkes processes is given in \cite{Liniger2009}.  Zheng {\it et al.} (2014) introduced a multivariate point process describing the dynamics of the Bid and Ask price of a financial asset \cite{Zheng2014}. The point process is similar to a Hawkes process, with additional constraints on its intensity corresponding to the natural ordering of the best Bid and Ask prices.  Eichler {\it et al.}  (2017) has shown that the Granger causality structure of multivariate HP is fully encoded in the corresponding link functions of the model \cite{Eichler2017}. A new nonparametric estimator of the link functions based on a time-discretized version of the point process was introduced by using an infinite order autoregression. Consistency of the new estimator was derived. The estimator was applied to simulated data and to neural spike train data from the spinal dorsal horn of a rat. 
Chen {\it et al.}  (2019)  developed a new approach for investigating the properties of the HP without the restriction to mutual excitation or linear link functions \cite{Chen2019}. They employed a thinning process representation and a coupling construction to bound the dependence coefficient of the HP. Using recent developments on weakly dependent sequences, a concentration inequality for second-order statistics of the HP was established. This concentration inequality was applied to cross-covariance analysis in the high-dimensional regime, and it was verified the theoretical claims with simulation studies \cite{Chen2019}.  Lemonnier {\it et al.}  presented a framework for fitting multivariate HP for large-scale problems, both in the number of events in the observed history $n$ and the number of event types $d$ (i.e. dimensions) \cite{Lemonnier2017}. 
Liniger (2009) thesis addresses theoretical and practical questions arising in connection with multivariate, marked, linear HP \cite{Liniger2009}. 
Yang {\it et al.}  (2017)  developed a nonparametric and online learning algorithm that estimates the triggering functions of a multivariate HP \cite{Yang2017}. \cite{Rambaldi2017} has shown that multivariate Hawkes processes coupled with the nonparametric estimation procedure can be successfully used to study complex interactions between the time of arrival of orders and their size observed in a limit order book market. This methodology was applied to high-frequency order book data of futures traded at EUREX. 
Introduction to point processes from a martingale point of view may be found in Bjork (2011) lecture notes \cite{Bjork2011}.

Guo {\it et al.} (2020) constructed a multivariate general compound Hawkes process (MGCHP) \cite{Guo2020} which is an extended model from \cite{Cont2013} and \cite{Swishchuk20171}. In \cite{Guo2020}, they applied the multivariate Hawkes process to model the order flow of several stocks in limit order market and proved limit theorems for the MGCHP. In this paper, we proposed a new mid-price model which is a generalization of the MGCHP and we called it the multivariate general compound point process (MGCPP). For the MGCPP, we applied a multi-dimensional simple point process to represent the order flow in LOB instead of the Hawkes process. We also proved the corresponding LLN and FCLTs for the MGCPP. One of the reasons why we considered the generalized model is parameters for simple point process are much easier to estimate than Hawkes process. So, we provided the numerical comparisons of the MGCPP and MGCHP by real high-frequency trading data and we found that results of the new generalized model are as good as the MGCHP. 

This paper is organized as follows. Definition and assumptions of the multivariate general compound point process (MGCPP) can be found in Section \ref{sec2}. Functional central limit theorem (FCLT) \RomanNumeralCaps{1} and law of large numbers were proved in Section \ref{sec3}. We also provided numerical examples simulated by real data for the FCLT \RomanNumeralCaps{1} in Section \ref{sec3}. In Section \ref{sec4}, we considered a FCLT \RomanNumeralCaps{2} for the MGCPP and applied it in the mid-price prediction. Section \ref{sec5} concludes the paper.

\section{Definition of Multivariate General Compound Point Process (MGCPP)} \label{sec2}

In this Section, we proposed a multivariate stochastic model for the mid-price in the limit order book. This is a generalization for models in \cite{Cont2013}, \cite{Guo2020}, and \cite{Swishchuk20171}. Here, we assume the order flow was described by a multivariate simple point process with some good asymptotic properties. 

\begin{definition}
	(Counting Process). (see, eg., \cite{Daley2007}): We called a stochastic process $\{N(t), t \geq 0\}$ counting process if it satisfies $N(t) \geq 0 $, $N(0) = 0$, $N(t+s) \geq N(t)$, for all $t,s \geq 0$, and $N(t)$ is an integer. 
\end{definition} 

\begin{definition}
	(Point Process). (see, eg., \cite{Daley2007}): Let $(T_1, \, T_2, \, T_3, \cdots )$ be a sequence of non-negative random variables with $P(0 \leq T_1 \leq T_2 \leq T_3 \leq \cdots) = 1$, and the number of points in a bounded region is almost surely finite, then $(T_1, \, T_2, \, T_3, \cdots )$ is called a point process.	
\end{definition}

The point process was characterized by the conditional intensity function $\lambda (t)$ in the form of
\begin{equation}
\lambda (t) = \lim_{h \rightarrow 0} \frac{E[N(t+h)-N(t)|{\cal{F}}^N (t)]}{h},
\end{equation}
where $\lambda (t)$ is a non-negative function and ${\cal{F}}^N (t)$, $t > 0$ is the corresponding natural filtration. 

\subsection{Assumptions for Multivariate Point Processes}
Let $ \Vec{N}_t = (N_{1,t}, N_{2,t}, \cdots, N_{d,t},)$ be $d$-dimensional point process with following assumptions: 

\begin{assumption} \label{as1}
	We assume there's a law of large numbers (LLN) of the $\Vec{N}_t$ in the form of:
	
	\begin{equation} \label{as1f}
	\frac{\Vec{N}(nt)}{n} \rightarrow \Vec{\bar{\lambda}}t
	\end{equation}	
	as $n \rightarrow+\infty$ almost-surely, where $\Vec{\bar{\lambda}} = (\bar{\lambda}_1, \bar{\lambda}_2, \bar{\lambda}_3, \cdots, \bar{\lambda}_d) $.
\end{assumption}

\begin{assumption} \label{as2}
	We also assume there's a Functional Central Limit Theorem (FCLT) of the $\Vec{N}_t$ in the form of:
	
	\begin{equation} \label{as2f}
	\frac{1}{\sqrt{n}}(\Vec{N}_{nt} - E(\Vec{N}_{nt})), \,\, t \in [0,1]
	\end{equation} 
	converge in law of the Skorohod topology to $\Sigma^{1/2}\Vec{W}_t$ as $n \rightarrow \infty$, where $\Vec{W}_t$ is a standard $d$-dimensional Brownian motion and $\Sigma$ is in the form of:
	$\Sigma = \mathrm{diag} (\sigma^2_1,\, \sigma^2_2,\, \sigma^2_3, \,\cdots, \sigma^2_d)$.
\end{assumption}

Here, $\Vec{N_t}$ denotes the order flow in the limit order market for $d$ stocks. Liquidity for the high-frequency trading data guarantee there are enough price changes in one day or even a small window size $nt$. So, it is resealable to consider those two limit assumptions before.   

\begin{rem}
	
	For a simple example, if we consider the point process as a multivariate homogeneous Poisson process, then two assumptions above are LLN and FCLT for the multi-dimensional Poisson process. Let $\vec{P}_t$ be a $d$-dimensional Poisson process with intensity $\vec{\lambda}$. Here, we used notation $\vec{P}_t$ to distinguish the general case and Poisson example. Then, we have the LLN in the form of
	\begin{equation}
	sup_{t \in [0,1]} \norm{ n^{-1} \vec{P}_{nt} - t\vec{\lambda}} \rightarrow 0
	\end{equation}
	as $n \rightarrow \infty$ almost-surely. And the FCLT in the form of 
	\[
	\sqrt{n} \bigg(  \frac{1}{n} \vec{P}_{nt} -t\vec{\lambda} \bigg)
	\]
	converge in law for the Skorokhod topology to $  \vec{W}_t \circ \vec{\lambda}^{1/2}$ as $n \rightarrow \infty$, where $\circ$ is the element-wise product. 
\end{rem} 

\begin{rem}
	Another interesting example is the multivariate Hawkes process (MHP). Let $\vec{H}_t = (H_{1,t},\, H_{2,t}, \, \cdots, H_{d,t})$ be a $d$-dimensional Hawkes process with the intensity function for each $H_i$ in the form of
	\begin{equation}
	\lambda_i(t) = \lambda_i + \int_{(0,t)} \sum_{j=1}^d \mu_{ij}(t-s) d H_{j,s},
	\end{equation}
	Let $\boldsymbol{\mu} = (\mu_{ij})_{1 \leq i,j \leq d}$, $\vec{\lambda} = (\lambda_1, \lambda_2, \cdots, \lambda_d)^T$, and $\mathbf{K} = \int_0^{\infty} \boldsymbol{\mu} (t) dt$, then the LLN for MHP is in the form of
	\begin{equation}
	sup_{t \in [0,1]} \norm{ n^{-1} \vec{H}_{nt} - t(\mathbf{I}-\mathbf{K})^{-1}\vec{\lambda}} \rightarrow 0
	\end{equation}
	as $n \rightarrow \infty$ almost-surely, where $\mathbf{I}$ is a $d$-dimensional identity matrix. And we can also have the FCLT for MHP:	
	\[
	\frac{1}{\sqrt{n}}(\vec{H}_{nt} - E(\vec{H}_{nt})), \,\, t \in [0,1]
	\] 
	converge in law of the Skorohod topology to $(\mathbf{I} - \mathbf{K})^{-1} \mathbf{D}^{1/2}\vec{W}_t$ as $n \rightarrow \infty$, where $\vec{W}_t$ is a standard $d$-dimensional Brownian motion and $\mathbf{D}$ is a diagonal matrix such that $\mathbf{D}_{ii} = ((\mathbf{I}-\mathbf{K})^{-1}\vec{\lambda})_i$. Details about the LLN and FCLT of MHP can be found in \cite{Bacry2013}.
\end{rem}

\subsection{Definition for MGCPP}

Next, we consider a price process $ \vec{S_t}$ in the form
$\vec{S}_t = (S_{1,t}, S_{2,t}, \cdots, S_{d,t},)$ as:
\begin{equation} \label{priceprocess}
S_{i,t} = S_{i,0} + \sum_{k=1}^{N_{i,t}} a_i(X_{i,k}),
\end{equation}
where $X_{i,k}$ are independent ergodic continuous-time Markov chains and $a_i(\cdot)$ are bounded continuous functions on $X$. We refer $S_t$ as multivariate general compound point processes (MGCPP).

\begin{rem}
	If we consider the one-dimensional case, let $N_t$ be a Poisson process, $a(x) = x $, and $X_k$ is a sequence of
	independent random variables such that $P(X_1=\delta)=P(X_1=-\delta)=1/2$, then $S_t$ is a stochastic model for the dynamics of a limit order book discussed in \cite{Cont2013}. 
\end{rem}

\begin{rem}
	When $\vec{N}_t$ is a multivariate Hawkes process, then $\vec{S}_t$ is a multivariate general compound Hawkes processes (MGCHP) which proposed in \cite{Guo2020}.
\end{rem}

\section{LLNs and Diffusion Limits for MGCPP}\label{sec3}

In this Section, we considered the diffusion limit theorems for the MGCPP. It provides us a link between the order flow $\Vec{N}_t$ and the price process $\Vec{S}_t$. The functional central limit theorem and law of large numbers for the MGCPP are generalizations for the diffusion limit theorems of the MGCHP in \cite{Guo2020}.

\subsection{LLN for MGCPP}

\begin{theorem} \label{lln} (LLN for MGCPP). Let $\Vec{S}_{nt} = (S_{1,nt},S_{2,nt},S_{3,nt}, \cdots, S_{d,nt})$ be a $d$-dimensional general compound point process defined before, we have	
	\[
	\frac{\Vec{S}_{nt}}{n} \rightarrow \Tilde{a^*} \Vec{\bar{\lambda}}t
	\]
	as $n \rightarrow \infty$ almost-surly.
\end{theorem}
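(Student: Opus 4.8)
The plan is to reduce the statement, component by component, to the interplay between the law of large numbers for the order flow $\Vec{N}_t$ (Assumption \ref{as1}) and an ergodic-averaging argument for the Markov chains $X_{i,k}$. Fix a coordinate $i$ and write $S_{i,nt} = S_{i,0} + \sum_{k=1}^{N_{i,nt}} a_i(X_{i,k})$. First I would rewrite the normalized price as a product of two factors,
\[
\frac{S_{i,nt}}{n} = \frac{S_{i,0}}{n} + \frac{N_{i,nt}}{n}\cdot\frac{1}{N_{i,nt}}\sum_{k=1}^{N_{i,nt}} a_i(X_{i,k}),
\]
so that the first term vanishes and the remaining product is handled by treating the two factors separately. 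By Assumption \ref{as1}, $N_{i,nt}/n \to \bar{\lambda}_i t$ almost surely, and in particular $N_{i,nt}\to\infty$ almost surely as $n\to\infty$.

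Next I would invoke ergodicity of the chain $X_{i,k}$: since $X_{i,k}$ is an ergodic continuous-time Markov chain with a unique stationary distribution $\pi_i$, the strong law of large numbers for ergodic Markov chains (Birkhoff's ergodic theorem applied to the path functional) gives
\[
\frac{1}{m}\sum_{k=1}^{m} a_i(X_{i,k}) \longrightarrow \int a_i \, d\pi_i =: a_i^*
\]
almost surely as $m\to\infty$, where boundedness and continuity of $a_i$ ensure integrability. Composing this with the (random, but a.s.\ divergent) index $m = N_{i,nt}$ — justified because a convergent sequence along every deterministic subsequence converges along any a.s.-divergent random index — yields $\frac{1}{N_{i,nt}}\sum_{k=1}^{N_{i,nt}} a_i(X_{i,k}) \to a_i^*$ almost surely. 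Multiplying the two a.s.\ limits gives $S_{i,nt}/n \to a_i^* \bar{\lambda}_i t$ almost surely for each $i$, which in vector form is exactly $\Vec{S}_{nt}/n \to \widetilde{a^*}\,\Vec{\bar\lambda}\,t$ with $\widetilde{a^*} = \mathrm{diag}(a_1^*,\dots,a_d^*)$.

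The main obstacle is the composition step — pairing the ergodic average (indexed by a deterministic counter) with the random, $n$-dependent upper limit $N_{i,nt}$ — and making the convergence uniform in $t\in[0,1]$ rather than merely pointwise in $t$. For the random-index issue, the clean route is to note that on the almost-sure event where both the ergodic average converges and $N_{i,nt}\to\infty$, the composition is immediate; one should also control the fluctuation of the partial sum between consecutive values of the counting process, but boundedness of $a_i$ makes that a term of order $O(1/n)$. For uniformity in $t$, I would either upgrade Assumption \ref{as1} to its uniform (Dini-type) form — as is in fact available for the Poisson and Hawkes examples cited — and use monotonicity of $t\mapsto N_{i,nt}$ together with continuity of the limit $t\mapsto\bar\lambda_i t$ to pass from pointwise to uniform convergence via a standard Pólya-type argument. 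The rest is routine.
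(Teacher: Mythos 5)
Your proposal follows essentially the same route as the paper's proof: discard $S_{i,0}/n$, factor the normalized sum as $\frac{N_{i,nt}}{n}\cdot\frac{1}{N_{i,nt}}\sum_{k=1}^{N_{i,nt}}a_i(X_{i,k})$, and combine the strong LLN for the ergodic Markov chain with Assumption \ref{as1} on the point process. Your additional remarks on justifying the random-index composition and on uniformity in $t$ are careful refinements of steps the paper takes for granted, but they do not change the argument.
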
 

\begin{proof} [Proof of Theorem \ref{lln}]
	From the definition of MGCPP in equation (\ref{priceprocess}), we have 	
	\[
	\frac{S_{i, n t}}{n}=\frac{S_{i,0}}{n}+\sum_{k=1}^{N_{i, n t}} \frac{a_i \left(X_{i, k}\right)}{n}.
	\]
	Since $S_{i,0}$ is a constant, we have
	\begin{equation} 
	\begin{split}
	\lim_{n\rightarrow \infty} \left( \frac{S_{i,t}}{n} \right) &= \lim_{n\rightarrow \infty} \left( \frac{S_{i,0}}{n} \right)
	+   \lim_{n\rightarrow \infty} \frac{ \sum_{k=1}^{N_{i,nt}} a_i(X_{i,k})}{n}  \\ & = 0 + \lim_{n\rightarrow \infty}  \frac{ \sum_{k=1}^{N_{i,nt}} a_i(X_{i,k})}{n}.
	\end{split}
	\end{equation}
	Recall the strong LLN of Markov chain (see, eg,. \cite{Norris1998}), we have 	
	\[
	\frac{1}{n} \sum_{k=1}^{n} a_i\left(X_{i, k}\right) \rightarrow_{n \rightarrow+\infty} a_{i}^{*}, \,\,\,\,\, a.s.,
	\] 
	where $a_i^*$ is defined by $a_i^* = \sum_{k \in X_i} \pi_{i,k}^{*} a_i\left(X_{i,k}\right)$. Consider the LLN of MPP in assumption \ref{as1}, we have
	\[
	\frac{N_{i,nt}}{n} \rightarrow {\bar{\lambda}_i}t
	\]
	as $n \rightarrow \infty$ almost-surly, we obtain	
	\begin{equation} \label{llnpf}
	\frac{1}{n} \sum_{k=1}^{N_{i, n t}} a_i\left(X_{i, k}\right)=\frac{N_{i, n t}}{n} \frac{1}{N_{i, n t}} \sum_{k=1}^{N_{i, n t}} a_i\left(X_{i, k}\right) \rightarrow_{n \rightarrow+\infty} a_{i}^{*} \bar{\lambda_i}t, \,\,\,\,\, a.s.
	\end{equation}	
	Rewrite (\ref{llnpf}) in the multivariate case, we derive the LLN for the MGCPP.	
\end{proof}

\subsection{Diffusion Limits for MGCPP: Stochastic Centralization}

\begin{theorem} \label{FCLT1} (FCLT \RomanNumeralCaps{1}: Stochastic Centralization). Let $X_{i,k}, \, i=1,2,\cdots,d$ be independent ergodic Markov chains with $n$ states $\{1,2,\cdots,n\}$ and with ergodic probabilities $\left(\pi_{i,1}^{*}, \pi_{i,2}^{*}, \ldots, \pi_{i,n}^{*}\right)$. Let $\vec{S}_{nt}$ be $d$-dimensional general compound point process, we have 
	\begin{equation} \label{TFCLT}
	\frac{\vec{S}_{nt} - \Tilde{a^*} \vec{N}_{nt}}{\sqrt{n}} \longrightarrow \Tilde{\sigma}^* \Lambda^{1/2} \Vec{W}(t), \, for\, all\, t>0 
	\end{equation}   
	as $n \rightarrow \infty$, where $\Vec{W}(t)$ is a standard $d$-dimensional Brownian motion, $\Lambda$ is a diagonal matrix such that $\Lambda = \normalfont{diag} (\bar{\lambda}_1,\, \bar{\lambda}_2,\, \bar{\lambda}_3, \,\cdots, \bar{\lambda}_d) $, $\vec{N}_{nt}$ is a $d$-dimensional vector, $\Tilde{a^*}$ and $\Tilde{\sigma}^*$ are diagonal matrices
	\[
	\Tilde{a^*} =
	\begin{bmatrix}
	a^*_1 & \cdots & 0\\
	\vdots& \ddots & \vdots\\
	0 & \cdots & a^*_d
	\end{bmatrix}, \,   \vec{N}_{nt} =
	\begin{bmatrix}
	N_{1,nt}  \\
	\vdots \\
	N_{d,nt}
	\end{bmatrix}, \,\Tilde{\sigma}^* =
	\begin{bmatrix}
	\sigma^*_1 &\cdots & 0 \\
	\vdots & \ddots & \vdots \\
	0 & \cdots & \sigma^*_d
	\end{bmatrix}.
	\]
	
	Here, $a_i^* = \sum_{k \in X_i} \pi_{i,k}^{*} a_i\left(X_{i,k}\right)$, and $\left(\sigma^{*}_i\right)^{2} :=\sum_{k \in X_i} \pi_{i,k}^{*} v_i(k)$ with
	\[
	\begin{aligned} v_i(k) &=b_i(k)^{2}+\sum_{j \in X_i}(g_i(j)-g_i(k))^{2} P_i(k, j) - 2 b_i(k) \sum_{j \in X_i}(g_i(j)-g_i(k)) P_i(k, j) \\ b_i &=(b_i(1), b_i(2), \ldots, b_i(n))^{\prime} \\ b_i(k) : &=a_i(k)-a_i^{*} \\ g_i : &=\left(P_i+\Pi_i^{*}-I\right)^{-1} b_i, \end{aligned}
	\]
	where $P_i$ is the  transition probability matrix for the Markov chain $X_i$, $\Pi_i^*$ is the matrix of stationary distributions of $P_i$,
	and $g_i(j)$ is the $j$th entry of $g_i$.	
\end{theorem}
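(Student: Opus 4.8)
The plan is to reduce the multivariate statement to $d$ independent one-dimensional statements, since the price coordinates $S_{i,t}$ are driven by independent Markov chains $X_{i,k}$ and (by the diagonal structure of $\Sigma$ in Assumption~\ref{as2}) by asymptotically independent point-process coordinates; the limiting matrices $\Tilde{a^*}$, $\Tilde{\sigma}^*$, $\Lambda$ are all diagonal, so it suffices to prove, for each fixed $i$, that $\bigl(S_{i,nt}-a_i^*N_{i,nt}\bigr)/\sqrt n \Rightarrow \sigma_i^*\sqrt{\bar\lambda_i}\,W_i(t)$ in the Skorohod topology, and then assemble the coordinates. First I would write, using the definition (\ref{priceprocess}),
\[
\frac{S_{i,nt}-a_i^*N_{i,nt}}{\sqrt n}=\frac{1}{\sqrt n}\sum_{k=1}^{N_{i,nt}}\bigl(a_i(X_{i,k})-a_i^*\bigr)+\frac{S_{i,0}}{\sqrt n},
\]
so the constant term is negligible and the problem becomes a functional CLT for the partial sums $R_{i,m}:=\sum_{k=1}^m b_i(X_{i,k})$, with $b_i(k)=a_i(k)-a_i^*$, evaluated at the random time $m=N_{i,nt}$.

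The key analytic step is the CLT for an additive functional of an ergodic finite-state Markov chain with mean-zero increments $b_i(X_{i,k})$: by the standard Markov-chain CLT (via the solution $g_i=(P_i+\Pi_i^*-I)^{-1}b_i$ of the Poisson equation $g_i-P_ig_i=b_i$, i.e. a martingale decomposition $R_{i,m}=M_{i,m}+g_i(X_{i,1})-g_i(X_{i,m+1})$ with $M_{i,m}$ a martingale whose increments have stationary variance $v_i(X_{i,k})$), one gets $R_{i,\lfloor ns\rfloor}/\sqrt n\Rightarrow \sigma_i^*W_i(s)$ with $(\sigma_i^*)^2=\sum_k\pi_{i,k}^*v_i(k)$ exactly as displayed in the theorem; this is the computation already carried out in \cite{Swishchuk20171} and \cite{Guo2020}. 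Next I would invoke Assumption~\ref{as1} in its functional (uniform) form, $N_{i,nt}/n\to\bar\lambda_i t$ uniformly on $[0,1]$ a.s., which says the random time change $t\mapsto N_{i,nt}/n$ converges a.s.\ (hence in probability, in the Skorohod $J_1$ topology) to the deterministic continuous function $t\mapsto\bar\lambda_i t$. Combining the weak convergence of $R_{i,\lfloor n\cdot\rfloor}/\sqrt n$ with the a.s.\ convergence of the time change via a random-change-of-time lemma (e.g. Billingsley, or Whitt's continuous-mapping theorem for composition, using that the limit time change is continuous and strictly increasing) yields
\[
\frac{R_{i,N_{i,nt}}}{\sqrt n}=\frac{R_{i,\,n\cdot(N_{i,nt}/n)}}{\sqrt n}\Longrightarrow \sigma_i^*\,W_i(\bar\lambda_i t)\stackrel{d}{=}\sigma_i^*\sqrt{\bar\lambda_i}\,W_i(t),
\]
the last equality being the Brownian scaling $W_i(\bar\lambda_i t)\stackrel{d}{=}\sqrt{\bar\lambda_i}\,W_i(t)$ that produces the $\Lambda^{1/2}$ factor.

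Finally I would stack the $d$ coordinates: because the chains $X_{1,\cdot},\dots,X_{d,\cdot}$ are independent and the joint limit in Assumption~\ref{as2} has diagonal covariance $\Sigma$, the coordinate limits $\sigma_i^*\sqrt{\bar\lambda_i}\,W_i(t)$ are independent, so the vector converges to $\Tilde\sigma^*\Lambda^{1/2}\vec W(t)$ with $\vec W$ a standard $d$-dimensional Brownian motion; joint convergence (not merely marginal) follows because the pair (partial-sum process, time-change process) converges jointly in each coordinate and the coordinates are mutually independent, so the composition map applies to the whole vector at once. The main obstacle I anticipate is the rigorous justification of the random time change inside the Skorohod topology — one must be careful that the a.s.\ uniform convergence of $N_{i,nt}/n$ upgrades the marginal FCLT for $R_{i,\lfloor n\cdot\rfloor}$ to an FCLT for the composition, which requires either an explicit continuous-mapping argument for the composition functional (valid here since the limiting time change is a continuous nondecreasing surjection) or a direct tightness-plus-finite-dimensional-distributions argument; everything else is bookkeeping on diagonal matrices and the already-known Markov-chain CLT constants.
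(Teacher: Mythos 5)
Your proposal is correct and follows essentially the same route as the paper's own proof: the decomposition $S_{i,nt}-a_i^*N_{i,nt}=S_{i,0}+\sum_{k=1}^{N_{i,nt}}b_i(X_{i,k})$, the Markov-chain FCLT for the centered partial sums $R^*_{i,\cdot}$ obtained by the Poisson-equation/martingale method (which the paper delegates to \cite{Swishchuk20172} and \cite{Vadori2015}), and the random time change $t\mapsto N_{i,nt}/n\to\bar\lambda_i t$ producing the $\Lambda^{1/2}$ factor. You are in fact more explicit than the paper about the two points it glosses over --- the continuous-mapping justification of the composition in the Skorohod topology and the joint convergence needed to stack the $d$ coordinates --- but these are refinements of, not departures from, the paper's argument.
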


\begin{proof} [Proof of Theorem \ref{FCLT1}]	
	From the definition of MGCPP, we have	
	\begin{equation} 
	S_{i,nt} = S_{i,0} + \sum_{k=1}^{N_{i,nt}} a_i(X_{i,k}),
	\end{equation}
	and
	\begin{equation} 
	S_{i,t} = S_{i,0} + \sum_{k=1}^{N_{i,nt}} (a_i(X_{i,k}) - a_i^*) + a_i^* N_{i,nt},
	\end{equation}
	here the $a_i^*$ is defined by $a_i^* = \sum_{k \in X_i} \pi_{i,k}^{*} a_i\left(X_{i,k}\right)$. Then, for some $n$, we have
	
	\begin{equation} 
	\frac{S_{i,t} - a_i^* N_{i,nt}}{\sqrt{n}}=  \frac{S_{i,0} + \sum_{k=1}^{N_{i,nt}} (a_i(X_{i,k}) - a_i^*)}{\sqrt{n}}.
	\end{equation}
	Since $S_{i,0}$ is a constant, when $n \rightarrow \infty$, we have
	\begin{equation} 
	\begin{split}
	\lim_{n\rightarrow \infty} \left( \frac{S_{i,t} - a_i^* N_{i,nt}}{\sqrt{n}} \right) &= \lim_{n\rightarrow \infty} \left( \frac{S_{i,0}}{\sqrt{n}} \right)
	+   \lim_{n\rightarrow \infty} \left( \frac{ \sum_{k=1}^{N_{i,nt}} (a_i(X_{i,k}) - a_i^*)}{\sqrt{n}}  \right)\\ & = 0 + \lim_{n\rightarrow \infty} \left( \frac{ \sum_{k=1}^{N_{i,nt}} (a_i(X_{i,k}) - a_i^*)}{\sqrt{n}}  \right).
	\end{split}
	\end{equation}
	
	Consider the following sums:  
	\[ R_{i, n}^{*}:=\sum_{k=1}^{n}\left(a_i\left(X_{i, k}\right)-a_{i}^{*}\right), \]
	and
	\[
	U_{i, n}^{*}(t):=n^{-1 / 2}\left[(1-(n t-\lfloor n t\rfloor)) R_{i,\lfloor n t\rfloor}^{*}+(n t-\lfloor n t\rfloor) R_{i,\lfloor n t\rfloor+1}^{*}\right],
	\]
	where $\lfloor \cdot \rfloor $ is the floor function. As the similar martingale method in \cite{Swishchuk20172} and \cite{Vadori2015}, we have the following weak convergence in Skorokhod topology 
	\begin{equation}\label{conv}
	U_{i, n}^{*}(t) \rightarrow_{n \rightarrow+\infty} \sigma_{i}^{*} W_{i}(t).
	\end{equation}
	
	From the assumption (\ref{as1}), we have the LLN for the MPP in the form of
	
	\[
	\frac{N_{i}(nt)}{n} \rightarrow_{n \rightarrow \infty} \bar{\lambda}_i t.
	\]  
	
	Using change of time in (\ref{conv}) and let $t \rightarrow N_i(nt)/n $, we have
	
	\begin{equation} \label{conv1}
	U_{i, n}^{*}(N_i(nt)/n) \rightarrow_{n \rightarrow+\infty} \sigma_{i}^{*} \sqrt{\bar{\lambda_i}}W_{i}(t).
	\end{equation}
	
	Rewrite (\ref{conv1}) in the multivariate form we derive the weak convergence for MGCPP:
	\begin{equation} 
	\frac{\vec{S}_{nt} - \Tilde{a^*} \vec{N}_{nt}}{\sqrt{n}} \longrightarrow_{n\rightarrow \infty} \Tilde{\sigma}^* \Lambda^{1/2} \Vec{W}(t), \, for\, all\, t>0.
	\end{equation}   	
\end{proof}

Next, we consider a simple special case. Let $X_{i,k}$ be a Markov chain with two dependent states $(+\delta,-\delta)$ and the ergodic probabilities $\left(\pi_{i}^{*}, 1-\pi_{i}^{*}\right)$. In the limit order market, the $\delta$ is the fixed tick size and the $d$-dimensional point process $\Vec{N}_{nt}$ represents the order flow for $d$ stocks. Here, we set $a_i(x) = x$ in the equation \ref{priceprocess}. In this way, we can derive the corresponding limit theorems for the $d$-dimensional price process $\Vec{S}_{nt}$.

\begin{corollary} \label{fclt1c} (FCLT \RomanNumeralCaps{1} two-state MGCPP: Stochastic Centralization). 
	
	\begin{equation} \label{TFCLTc}
	\frac{\vec{S}_{nt} - \Tilde{a^*} \vec{N}_{nt}}{\sqrt{n}} \longrightarrow_{n \rightarrow \infty} \Tilde{\sigma}^* \Lambda^{1/2} \Vec{W}(t), \, for\, all\, t>0 ,
	\end{equation}  
	where $\Vec{W}(t)$ is a standard $d$-dimensional Brownian motion, ${\Lambda}$ is a diagonal matrix such that $\Lambda = \normalfont{diag} (\bar{\lambda}_1,\, \bar{\lambda}_2,\, \bar{\lambda}_3, \,\cdots, \bar{\lambda}_d) $, $\Tilde{a^*}$ and $\Tilde{\sigma}^*$ are diagonal matrices defined as
	
	\[
	\Tilde{a^*} =
	\begin{bmatrix}
	a^*_1 & \cdots & 0\\
	\vdots& \ddots & \vdots\\
	0 & \cdots & a^*_d
	\end{bmatrix}, \,   \vec{N}_{nt} =
	\begin{bmatrix}
	N_{1,nt}  \\
	\vdots \\
	N_{d,nt}
	\end{bmatrix}, \,\Tilde{\sigma}^* =
	\begin{bmatrix}
	\sigma^*_1 &\cdots & 0 \\
	\vdots & \ddots & \vdots \\
	0 & \cdots & \sigma^*_d
	\end{bmatrix},
	\]
	where $a_i^* = \delta(2\pi_{i}^* - 1)$, and 
	\begin{equation}\label{sigmastar}
	\sigma_{i}^{* 2}:=4 \delta^{2}\left(\frac{1-p_i^{\prime}+\pi_i^{*}\left(p_i^{\prime}-p_i\right)}{\left(p_i+p_i^{\prime}-2\right)^{2}}-\pi_i^{*}\left(1-\pi_i^{*}\right)\right)
	\end{equation}
	$(p_i,p'_i)$ are transition probabilities of the Markov chain $X_{i,k}$. 	
	
\end{corollary}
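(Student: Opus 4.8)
The plan is to derive the corollary as the concrete two-state instance of Theorem~\ref{FCLT1}. The convergence in \eqref{TFCLTc} is literally \eqref{TFCLT}: with $a_i(x)=x$ and state space $\{+\delta,-\delta\}$, each $X_{i,k}$ is an ergodic Markov chain with ergodic probabilities $(\pi_i^*,\,1-\pi_i^*)$ and $a_i$ is bounded, so the hypotheses of Theorem~\ref{FCLT1} are met. Hence the only substantive task is to evaluate the constants $a_i^*$ and $(\sigma_i^*)^2$ of that theorem for this particular chain.

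First I would compute $a_i^*$ directly from its definition, $a_i^* = \pi_i^*(+\delta) + (1-\pi_i^*)(-\delta) = \delta(2\pi_i^*-1)$. Next I would record the ingredients needed for $(\sigma_i^*)^2 = \sum_{k}\pi_{i,k}^* v_i(k)$. Writing state $1=+\delta$, state $2=-\delta$ and $P_i = \begin{pmatrix} p_i & 1-p_i \\ 1-p_i' & p_i' \end{pmatrix}$, the stationary distribution is $\pi_i^* = (1-p_i')/(2-p_i-p_i')$; the centered vector $b_i$ has entries $b_i(1) = 2\delta(1-\pi_i^*)$ and $b_i(2) = -2\delta\pi_i^*$; and $\Pi_i^*$ is the $2\times2$ matrix both of whose rows equal $(\pi_i^*,\,1-\pi_i^*)$.

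The core of the calculation is solving the $2\times2$ linear system $(P_i + \Pi_i^* - I)g_i = b_i$. Since $P_i$ and $\Pi_i^*$ commute and share the eigenvector $\mathbf 1$ with eigenvalue $1$, while on the complementary direction $P_i$ has eigenvalue $p_i+p_i'-1$ and $\Pi_i^*$ has eigenvalue $0$, the matrix $P_i+\Pi_i^*-I$ has determinant $p_i+p_i'-2$; thus $g_i$ acquires a factor $1/(p_i+p_i'-2)$, which is the origin of the $(p_i+p_i'-2)^2$ in the denominator of \eqref{sigmastar} once the quadratic terms of $v_i$ are formed. Having found $g_i(1),g_i(2)$, I would substitute into
\[
v_i(k) = b_i(k)^2 + \sum_j (g_i(j)-g_i(k))^2 P_i(k,j) - 2 b_i(k)\sum_j (g_i(j)-g_i(k)) P_i(k,j),
\]
compute the weighted sum $\pi_i^* v_i(1) + (1-\pi_i^*) v_i(2)$, and simplify. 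I expect the main obstacle to be purely algebraic bookkeeping: collapsing this rational expression to the advertised compact form $4\delta^2\bigl(\tfrac{1-p_i'+\pi_i^*(p_i'-p_i)}{(p_i+p_i'-2)^2} - \pi_i^*(1-\pi_i^*)\bigr)$ without error. A natural sanity check is the symmetric case $p_i=p_i'$, where $\pi_i^*=1/2$ and the expression should reduce to the one-dimensional variance formula of \cite{Swishchuk20171}. Finally, assembling $a_i^*$, $\sigma_i^*$ and $\bar\lambda_i$ into the diagonal matrices $\tilde a^*$, $\tilde\sigma^*$, $\Lambda$ and invoking Theorem~\ref{FCLT1} yields \eqref{TFCLTc}.
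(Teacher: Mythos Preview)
Your approach is exactly that of the paper: specialize Theorem~\ref{FCLT1} to the two-state chain $\{+\delta,-\delta\}$ with $a_i(x)=x$ and read off the constants. In fact you go further than the paper, which simply asserts that Corollary~\ref{fclt1c} follows ``directly'' from this specialization without displaying the algebra for $a_i^*$ and $\sigma_i^{*2}$.
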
 

\begin{corollary} \label{lln1c} (LLN for two-state MGCPP). 	
	Let $\Vec{S}_{nt}$ be $d$-dimensional general compound point process with two-state Markov chain $X_{i,k}$, we have
	\[
	\frac{\Vec{S}_{nt}}{n} \rightarrow \Tilde{a^*} \Vec{\bar{\lambda}}t, \,\,\,\,\, a.s.
	\]
	Here, $\Tilde{a^*}$ and $\Vec{\bar{\lambda}}$ are constants defined in corollary \ref{fclt1c}.	
\end{corollary}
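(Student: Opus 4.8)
The plan is to derive Corollary \ref{lln1c} as a direct specialization of Theorem \ref{lln} rather than reproving anything from scratch. Theorem \ref{lln} already gives $\Vec{S}_{nt}/n \to \Tilde{a^*}\Vec{\bar\lambda}t$ almost surely for any MGCPP built from independent ergodic Markov chains $X_{i,k}$ and bounded continuous $a_i(\cdot)$; so the only work is to verify that the two-state setup satisfies the hypotheses and to identify the constant $a_i^*$ explicitly.

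First I would observe that a Markov chain $X_{i,k}$ on the two states $\{+\delta,-\delta\}$ with ergodic (stationary) probabilities $(\pi_i^*, 1-\pi_i^*)$ is a finite irreducible chain, hence ergodic, and that $a_i(x)=x$ is trivially bounded and continuous on the two-point state space; therefore all assumptions underlying Theorem \ref{lln} are in force (Assumption \ref{as1} on $\Vec N_t$ is assumed throughout the section). Next I would instantiate the formula $a_i^* = \sum_{k\in X_i}\pi_{i,k}^* a_i(X_{i,k})$ from Theorem \ref{lln}: with $a_i(x)=x$ and the two states $+\delta,-\delta$ carrying weights $\pi_i^*$ and $1-\pi_i^*$, this gives $a_i^* = \delta\pi_i^* - \delta(1-\pi_i^*) = \delta(2\pi_i^*-1)$, which is exactly the value of $a_i^*$ recorded in Corollary \ref{fclt1c}. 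Hence $\Tilde{a^*}=\operatorname{diag}(\delta(2\pi_1^*-1),\dots,\delta(2\pi_d^*-1))$ is the same constant diagonal matrix as there.

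Finally I would simply invoke Theorem \ref{lln} with these data to conclude $\Vec{S}_{nt}/n \to \Tilde{a^*}\Vec{\bar\lambda}t$ almost surely, noting that $\Tilde{a^*}$ and $\Vec{\bar\lambda}$ are the constants from Corollary \ref{fclt1c} (equivalently from Assumption \ref{as1}), which is the claimed statement. There is essentially no obstacle here: the corollary is purely a matter of plugging the two-state Markov chain into the general LLN, and the only thing to be careful about is bookkeeping — confirming that the $\pi_i^*$ appearing in the corollary's statement of $a_i^*$ coincides with the stationary probability of state $+\delta$ used in Theorem \ref{lln}'s general definition of $a_i^*$, so that the two occurrences of $\Tilde{a^*}$ in the paper are consistent.
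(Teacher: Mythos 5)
Your proof is correct and follows exactly the paper's route: the paper likewise disposes of Corollary \ref{lln1c} in one line by setting the two states $(+\delta,-\delta)$ and $a_i(x)=x$ in Theorem \ref{lln}. Your explicit check that $a_i^*=\delta(2\pi_i^*-1)$ matches the constant in Corollary \ref{fclt1c} is a detail the paper leaves implicit, but the argument is the same.
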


\begin{proof}  [Proof of Corollary \ref{fclt1c} and \ref{lln1c}]
	Set Markov chain $X_{i,k}$ with two states $(+\delta,-\delta)$ and $a_i(x) = x$ in theorem \ref{FCLT1} and theorem \ref{lln}, we can derive corollary \ref{fclt1c} and \ref{lln1c} directly.
\end{proof}

\begin{rem}\label{aprem}
	From the FCLT \RomanNumeralCaps{1} of MGCPP, we can derive an approximation for the mid-price $\Vec{S}_{nt}$:
	\begin{equation} \label{apfclt1}
	\vec{S}_{nt}  \sim \Tilde{\sigma}^* \Lambda^{1/2} \Vec{W}(t)\sqrt{n} + \Tilde{a^*} \vec{N}_{nt},
	\end{equation} 
	for all $t>0$ and some lagre enough n. Since $\Vec{S}_{nt}$ is the price process in high-frequency trading, the time is always measured in a very short period (eg, milliseconds). So, even if the window size $nt = 10$ seconds with $t = 0.001$, the $n$ will equal to $10,000$ which is a very large number. In this way, it is reasonable to consider this kind of approximation in the LOB. 
\end{rem}

\begin{rem}
	When $\Vec{N}_t$ is a multivariate Hawkes process, the corresponding FCLTs and LLNs for the $\Vec{S}_{nt}$ were considered in \cite{Guo2020}. When we consider an one-dimensional case, if $N_t$ is a renewal process, the corresponding limit theorems for the semi-Markovian model $S_t$ model were discussed in \cite{Swishchuk20172} and \cite{Swishchuk20173}.
\end{rem}

\subsection{Numerical Examples for FCLT: Stochastic Centralization}

In this Section, we tested the FCLT \RomanNumeralCaps{1} of MGCPP model with the LOBSTER data and compared our results with the result simulated by MGCHP in \cite{Guo2020}. 

\subsubsection{Data Description and Parameter Estimation for MGCPPDO}

The level one LOBSTER data on June 21st, 2012 was considered in this paper. In this data, time is measured in milliseconds and the tick size is one cent which means the corresponding $\delta = 0.005$. We can find the basic data description and check the liquidity from Table \ref{datades}:

\begin{table}[h]
	\caption{Data Description and stock liquidity of Microsoft, Intel, Apple, Amazon, and Google for June 21st, 2012.}
	\label{datades}
	\centering
	%% \tablesize{} %% You can specify the fontsize here, e.g., \tablesize{\footnotesize}. If commented out \small will be used.
	\resizebox{\textwidth}{!}{\begin{tabular}{ccccc}
		\toprule
		\textbf{Ticker}	& \textbf{\# of Orders in 1 Day}	& \textbf{Avg \# of Orders / Sec} & \textbf{\# of Price Changes in 1 Day} & \textbf{Avg \# of Price Changes / Sec}\\
		\midrule
		INTC & 404986 & 17.3071 & 3218 & 0.1375\\
		MSFT & 411409 & 5.0640 & 4016 & 0.1716\\
		AAPL & 118497 & 5.0640 & 64351 & 2.7500 \\ 
		AMZN & 57515 & 2.4579 & 27558 & 1.1777 \\ 
		GOOG & 49482 & 2.1146 & 24085 & 1.0293 \\ 
		\bottomrule
	\end{tabular}}
\end{table}

Next, we estimate parameters $\Sigma = \mathrm{diag} (\sigma^2_1,\, \sigma^2_2,\, \sigma^2_3, \,\cdots, \sigma^2_d)$ and $\Vec{\bar{\lambda}} = (\bar{\lambda}_1, \bar{\lambda}_2, \bar{\lambda}_3, \cdots, \bar{\lambda}_d) $ via the LLN and FCLT assumptions of $\Vec{N}_t$. From \ref{as1} and \ref{as2}, when $n$ is large enough, we can derive the approximations:

\begin{equation}  \label{ap1}
\frac{\Vec{N}(nt)}{nt} \sim \Vec{\bar{\lambda}}, \,\,\, t \in [0,1]
\end{equation}
and 
\begin{equation} \label{ap2}
\frac{1}{\sqrt{n}}(\Vec{N}_{nt} - E(\Vec{N}_{nt})) \sim \Sigma^{1/2}\Vec{W}_t, \,\,\, t \in [0,1].
\end{equation} 

Take the expectation for (\ref{ap1}) and variance for (\ref{ap2}), we have
\begin{equation} \label{expnt}
\frac{\mathrm{E}(\Vec{N}(nt))}{nt} \sim \Vec{\bar{\lambda}}, \,\,\, t \in [0,1]
\end{equation}
and 
\begin{equation} 
\frac{1}{nt}(\mathrm{Var}(\Vec{N}_{nt})) \sim \Sigma, \,\,\, t \in [0,1].
\end{equation} 

In this way, we derived the estimated parameters for $5$ in Table \ref{paraorder}. 
\begin{table}[H]
	\caption{Estimated parameters of 5 stocks via the LLN and FCLT assumptions}
	\centering
	\begin{tabular}{cccc}
		\toprule
		\textbf{Ticker} & $\boldsymbol{\sigma}$ & $\boldsymbol{\sigma^2}$ & $\boldsymbol{\bar{\lambda}}$ \\ \midrule
		INTC   & 1.4380   & 2.0680 & 0.1366          \\ 
		MSFT   & 1.1390   & 1.2973 & 0.1729          \\ 
		AAPL   & 7.8981   & 62.38     & 2.2938          \\ 
		AMZN   & 4.3919  & 19.2883 & 1.0374          \\ 
		GOOG   & 4.7747   & 22.7980 & 0.8178          \\ \bottomrule
	\end{tabular}
	\label{paraorder}
\end{table}

Next, we estimated parameters for the Markov chain by applying the two-state MGCPP model in corollary \ref{fclt1c}. The transition matrix $P$ of two dependent state Markov chain $X_k$ is denoted as
\[P=\left[ \begin{array}{cc}{p_{u u}} & {1-p_{u u}} \\ {1-p_{d d}} & {p_{d d}}\end{array}\right].\]
We calculated frequency in our data to estimated the $p_{uu}$ and $p_{dd}$ in $P$ by 
\begin{align*}
p_{uu}&=\frac{q_{uu}}{q_{uu}+q_{ud}},\\
p_{dd}&=\frac{q_{dd}}{q_{dd}+q_{du}},
\end{align*}
where $q_{uu}$, $q_{dd}$, $q_{ud}$, and $q_{du}$ are the number of price goes up twice, goes down twice, goes up and then down, goes down and then up, respectively. And the result is in Table \ref{tran}:
\begin{table}[H]
	\caption{Transition matrix and constant parameters for two-state MGCPP. $\alpha^*$ and $\sigma^*$ were calculated by equation (\ref{sigmastar}). }
	\centering
	\begin{tabular}{ccccc}
		\toprule
		\textbf{Ticker} & $\boldsymbol{p_{uu}}$ & $\boldsymbol{p_{dd}}$ & $\boldsymbol{\sigma^*}$ & $\boldsymbol{a^*}$       \\ \midrule
		INTC   & 0.5373   & 0.5814   & 0.0057     & -2.5023$\times 10^{-4}$ \\ 
		MSFT   & 0.5711   & 0.6044   & 0.0060     & -2.0145$\times 10^{-4}$ \\ 
		AAPL   & 0.4954   & 0.4955   & 0.0050     & -2.1529$\times 10^{-7}$ \\ 
		AMZN   & 0.4511   & 0.4590   & 0.0046     & -3.6077$\times 10^{-5}$ \\ 
		GOOG   & 0.4536   & 0.4886   & 0.0047     & -1.6584$\times 10^{-4}$  \\ \bottomrule
	\end{tabular}
	\label{tran}
\end{table}

\subsubsection{Comparison with multivariate general compound Hawkes process with two dependent orders}

In this Section, we compared the simulation results of MGCPP with the multivariate general compound Hawkes process (MGCHP) model to show that the simple generalized model can also reach a good accuracy as the MGCHP who has a sophisticated intensity function. In \cite{Guo2020}, they simulated the MGCHP with two dependent states for Microsoft and Intel's data. So here we also conduct simulations for Microsoft and Intel's data with the two-state MGCPP, which means the Markov chain has two dependent states $(+\delta, -\delta)$.  

We tested the MGCPP model by comparing the standard deviation for the left hand side and right hand side in the FCLT:
\begin{equation*}
\frac{\vec{S}_{nt} - \Tilde{N} _{nt}\vec{a^*}}{\sqrt{n}} \longrightarrow_{n \rightarrow \infty} \Tilde{\sigma}^* \Lambda^{1/2} \vec{W}(t).
\end{equation*}
That is to say, we first cut our data into disjoint windows of size $nt$, specifically $[i nt,(i+1) nt] \text { with } t=0.001$ and by setting the left bound as our starting time we can calculate:
\[\vec{S}_{i}^{*}=\vec{S}_{(i+1) n t}-\vec{S}_{i n t}-(\Tilde{N}((i+1) n t)-\Tilde{N}(i n t)) \vec{a^*},\]
and the equation for standard deviation is given by

\begin{equation}\label{stdsqrt}
\operatorname{std}\left\{\vec{S}^{*}\right\} \approx \sqrt{n}  \Tilde{\sigma}^* \Lambda^{1/2} \sqrt{\vec{t}}.
\end{equation}

The Figure \ref{std} gives a standard deviation comparison of MGCPP, MGCHP, and the raw data for 2 stocks in different
window sizes from 0.1 second to 12 seconds in steps of 0.1 second. First, we could find the MGCPP parameters make the standard deviation of LHS very similar to the RHS for each stocks when $n$ is large. So, generally speaking, we can say our MGCPP model fits the data well. Second, the MGCPP curve is very close to the MGCHP curve or we could say the simulation results via Intel and Microsoft stocks data are nearly same. It shows that even we don't have a sophisticated intensity function as the Hawkes process, we still can reach a relative good result with a simple point process model. This can help us deal with the computing efficiency problem when using the MGCHP model. We'll give more quantitative error analysis later.

\begin{figure}[H]
	\centering	
	% first group of figures
	\begin{minipage}{0.47\textwidth}
		\includegraphics[width=\linewidth]{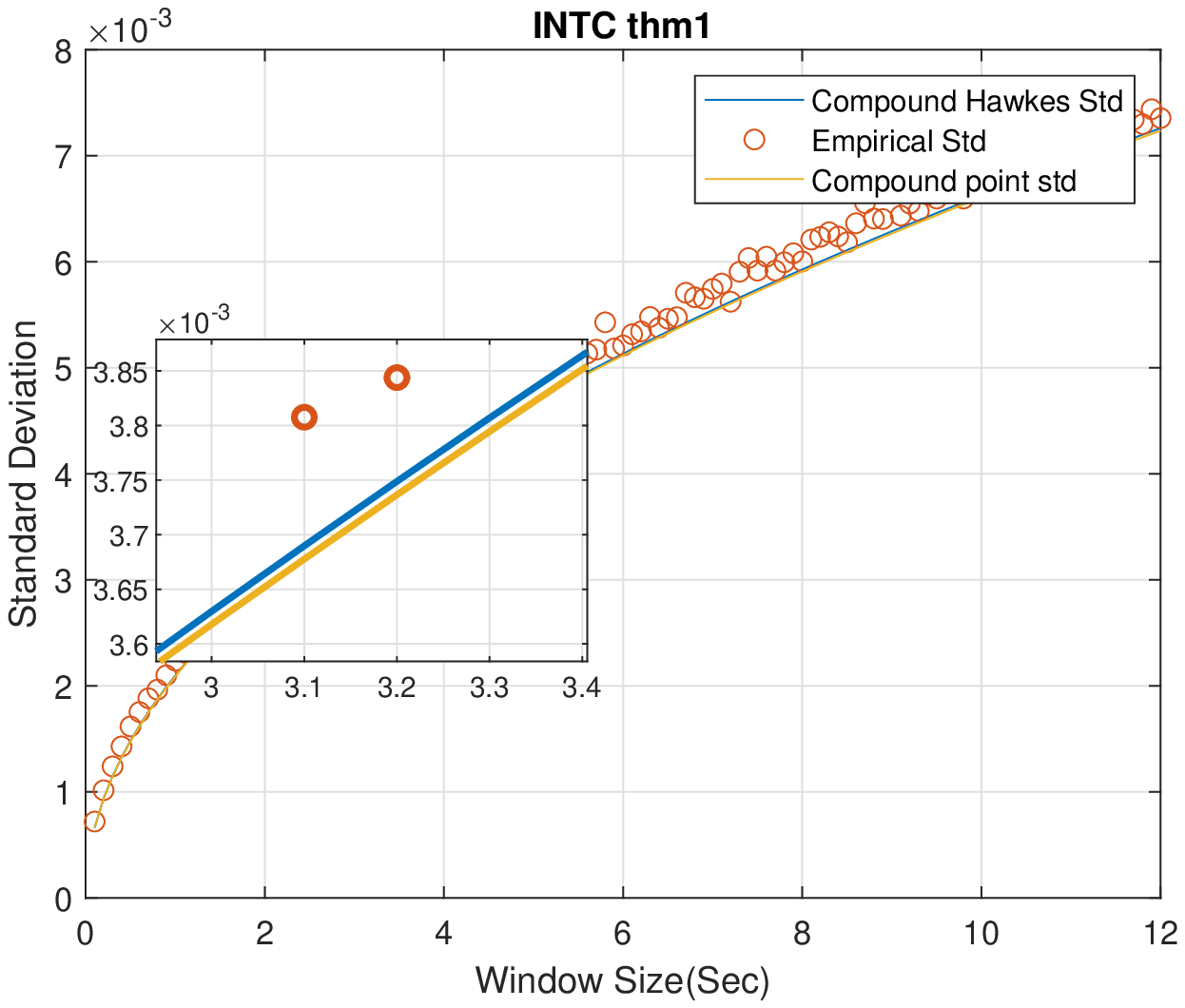}
		\label{fig:1}
	\end{minipage}
	\hspace{3mm} % choose horizontal spacing to suit your needs
	\begin{minipage}{0.47\textwidth}
		\includegraphics[width=\linewidth]{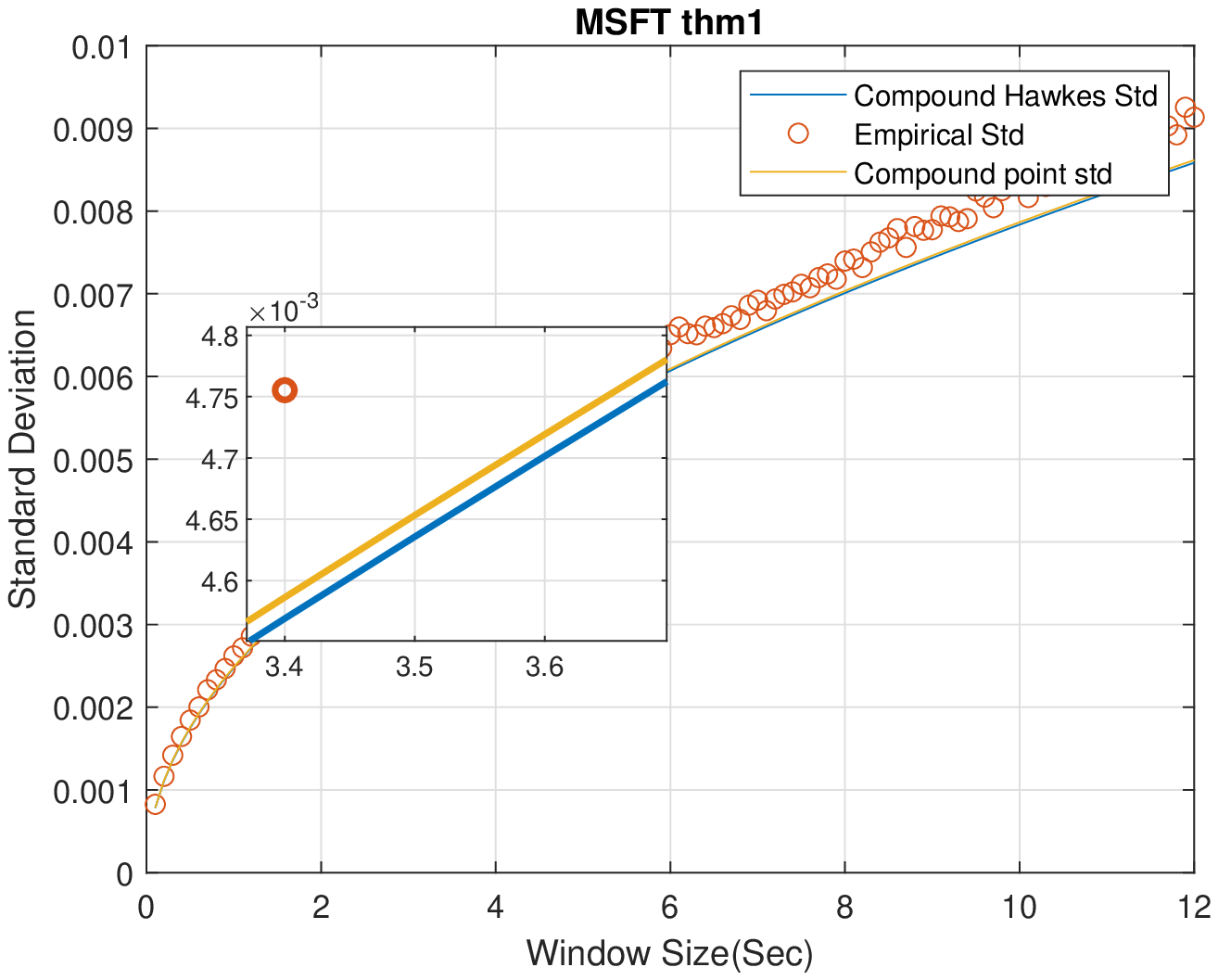}
		\label{fig:2}
	\end{minipage}
	\caption{Standard deviation comparisons for 2 stocks by FCLT \RomanNumeralCaps{1} for MGCHP and MGCPP}
	\label{std}
\end{figure}

\begin{rem}
	Since the number of windows decreases as the window size $nt$ increases, we can find that the spread of data increases when the window size increases in Figure \ref{std}. For example, when we consider $nt = 0.1$ second, the number of windows is 234,000. However, a 12-second window size yields 1,950 windows which will lead the standard deviation increases.
\end{rem}

Intuitively, the Figure \ref{std} shows that the standard deviation of MGCHP and MGCPP are very close and both of them fit the real standard deviation very well. Next, we analyze MGCHP and MGCPP models quantitatively. 

We computed the mean square error (MSE) of the real standard deviation and theoretical standard deviations in Table \ref{MSE}. As can be seen from the Table \ref{MSE}, MGCHP model performs better than the MGCPP model with both Intel and Microsoft data. For Intel stock data, the MSE of MGCHP is $17\%$ better than MGCPP and nearly $10\%$ better than MGCPP model with the Microsoft stock data. However, when we compare the order of magnitude of the MSE ($-8$) with the real standard deviation ($-2$ and $-3$), we still can conclude that MGCPP is good enough for the mid-price modeling task.

\begin{table}[H]
	\caption{The MSE of the real standard deviation and theoretical standard deviations from MGCHP and MGCPP. }
	\centering
	\begin{tabular}{ccc}
		\toprule
		\textbf{Ticker} & \textbf{MGCHP MSE}               & \textbf{MGCPP MSE }              \\ \midrule
		INTC   & $3.4039 \times 10^{-8}$ & $3.9858 \times 10^{-8}$ \\ 
		MSFT   & $9.6454 \times 10^{-8}$ & $8.6189 \times 10^{-8}$ \\ \bottomrule
	\end{tabular}
	\label{MSE}
\end{table}

Recall the equation (\ref{stdsqrt}), we can find the standard deviation and the square root of time step have a linear relationship. So, we can fit the real standard deviation data with the square root curve by using the least-square regression. And then, we can compare the coefficients from the least-square regression and two stochastic models. 

\begin{table}[H]
	\centering
	\caption{Coefficients calculated by MGCHP and MGCPP models. }
	\resizebox{\textwidth}{!}{\begin{tabular}{cccccc}
		\toprule
		\textbf{Ticker} & \textbf{MGCHP Coefficient} & \textbf{MGCPP Coefficient}  & \textbf{Regression Coefficient}  & \textbf{MGCHP \% Error}   & \textbf{MGCPP \% Error }  \\ \midrule
		INTC   & 0.002086          & 0.002089          & 0.002162               & $3.515\%$              & $3.377\%$              \\ 
		MSFT   & 0.002494          & 0.002487          & 0.002609               & $4.408\%$              & $4.676\%$              \\ \bottomrule
	\end{tabular}}
	\label{coeff}
\end{table}

From the Table \ref{coeff}, we can find that the percentage error of both two stochastic models are all smaller than $5\%$ and there is no significant difference between the MGCPP coefficient and the MGCHP coefficient.

\subsubsection{MGCPP with n-state Dependent Orders}

We will give more simulation examples by using the Google, Apple, and Amazon data with the MGCPP model with n-state dependent orders in this Section. Thanks to \cite{Swishchuk2020}, we can conclude that the accuracy of the general compound Hawkes process model increases when the number of states increases. And for Google, Apple, and Amazon in LOBSTER data set, the best number of states is $4$ to $7$. In the previous Section, we also showed that the simulation results of MGCPP is nearly same as the MGCHP. So, it's reasonable to consider a MGCPP model with $7$-state Markov chain here.

\begin{figure}[H]
	\centering	
	% first group of figures
	\begin{minipage}{0.42\textwidth}
		\includegraphics[width=\linewidth]{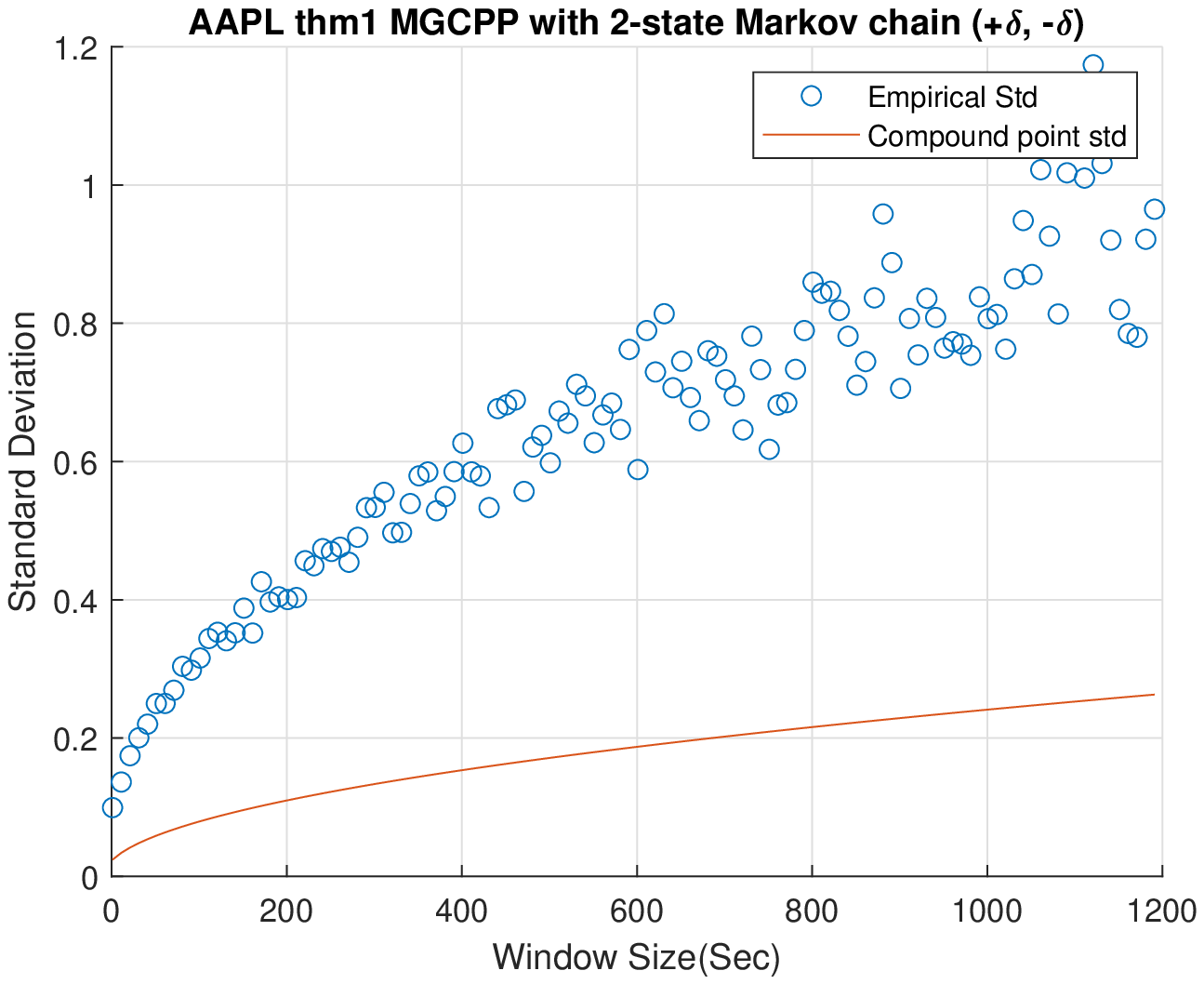}
		\label{aapl2}
	\end{minipage}
	\hspace{3mm} % choose horizontal spacing to suit your needs
	\begin{minipage}{0.42\textwidth}
		\includegraphics[width=\linewidth]{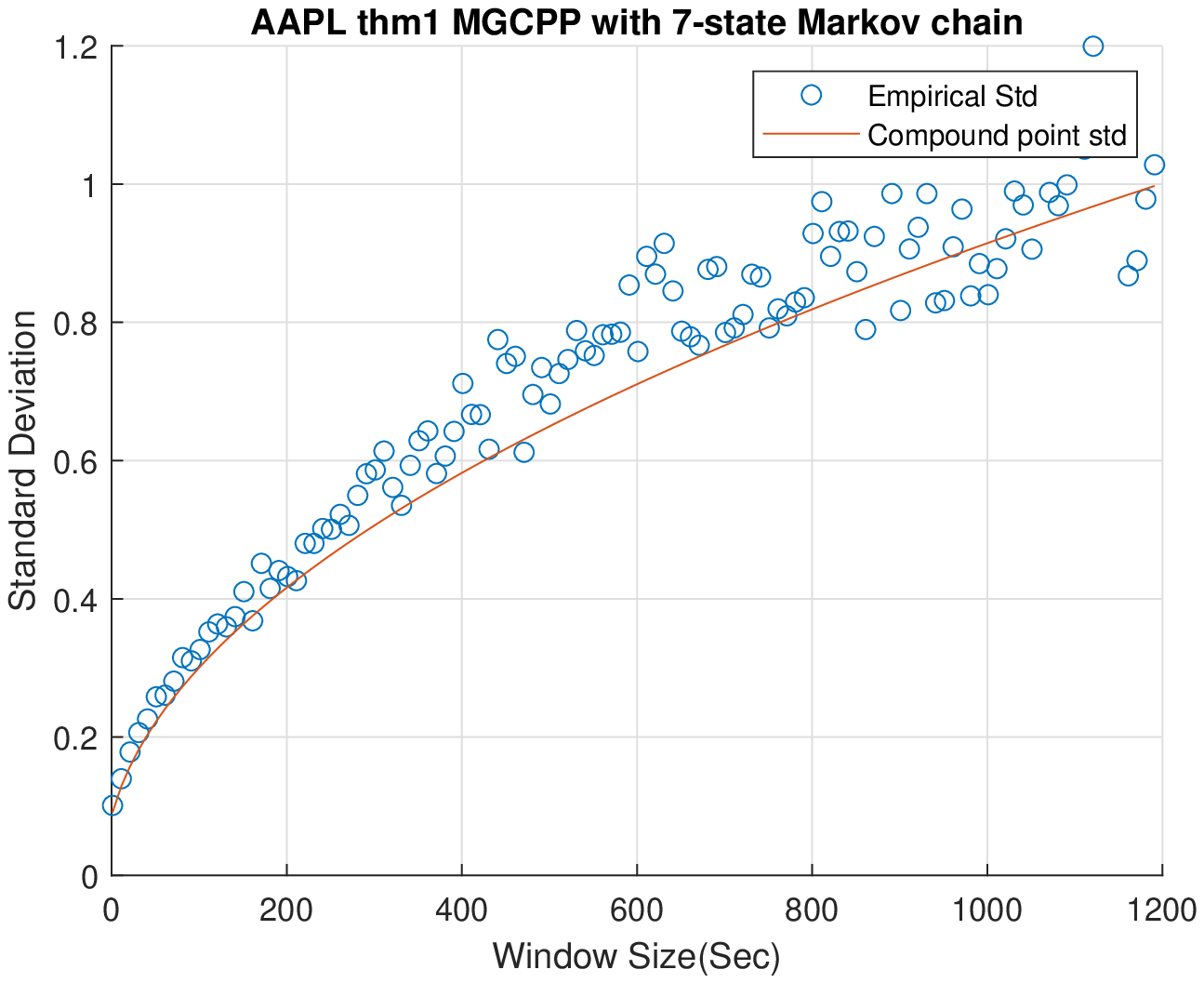}
		\label{aapl7}
	\end{minipage}
	\caption{Standard deviation comparisons for MGCPP with 2-state Markov chain and 7-state Markov chain simulated by Apple's stock data}
	\label{stdaapl}
\end{figure}

\begin{figure}[H]
	\centering	
	% first group of figures
	\begin{minipage}{0.42\textwidth}
		\includegraphics[width=\linewidth]{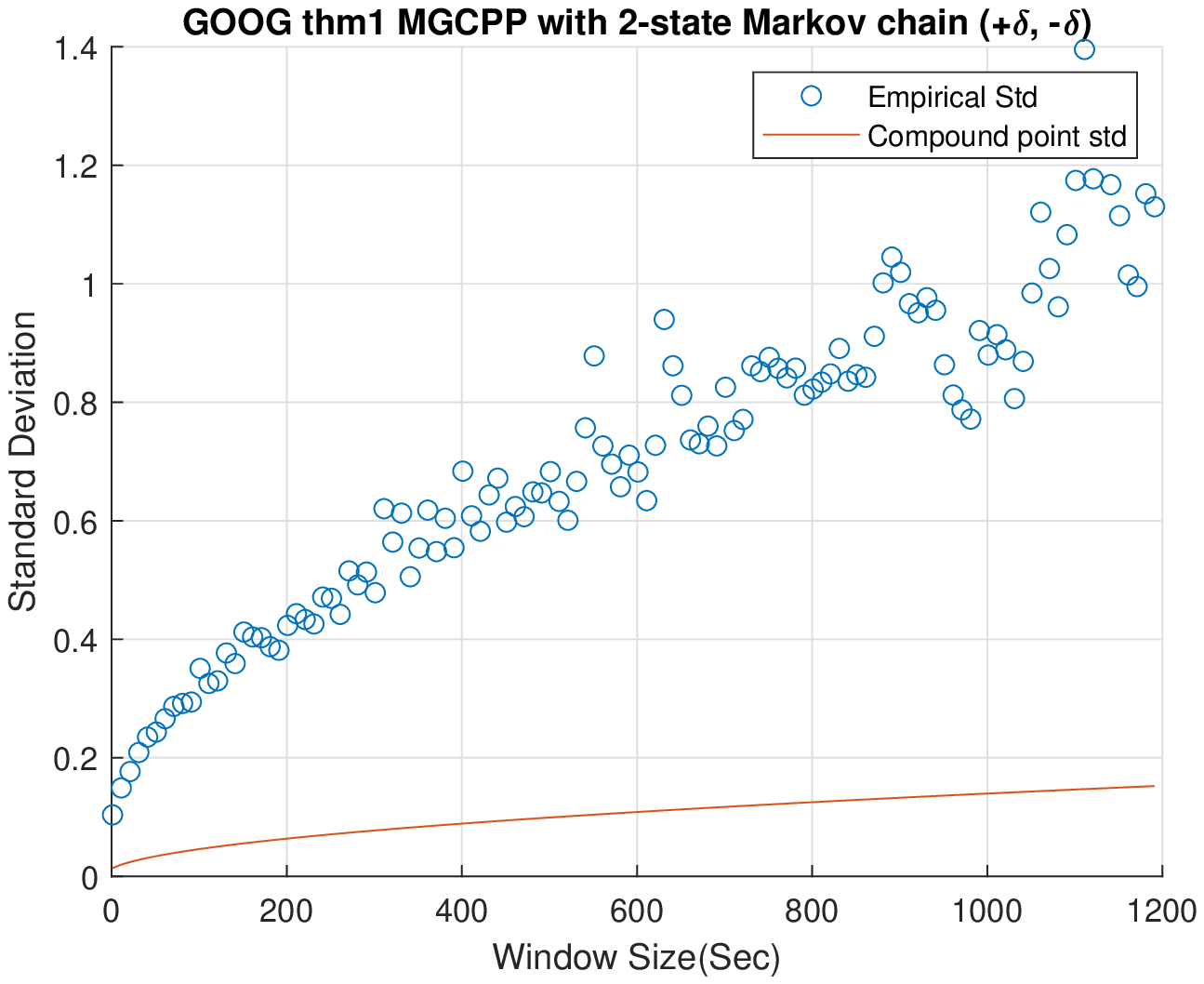}
		\label{goog2}
	\end{minipage}
	\hspace{3mm} % choose horizontal spacing to suit your needs
	\begin{minipage}{0.42\textwidth}
		\includegraphics[width=\linewidth]{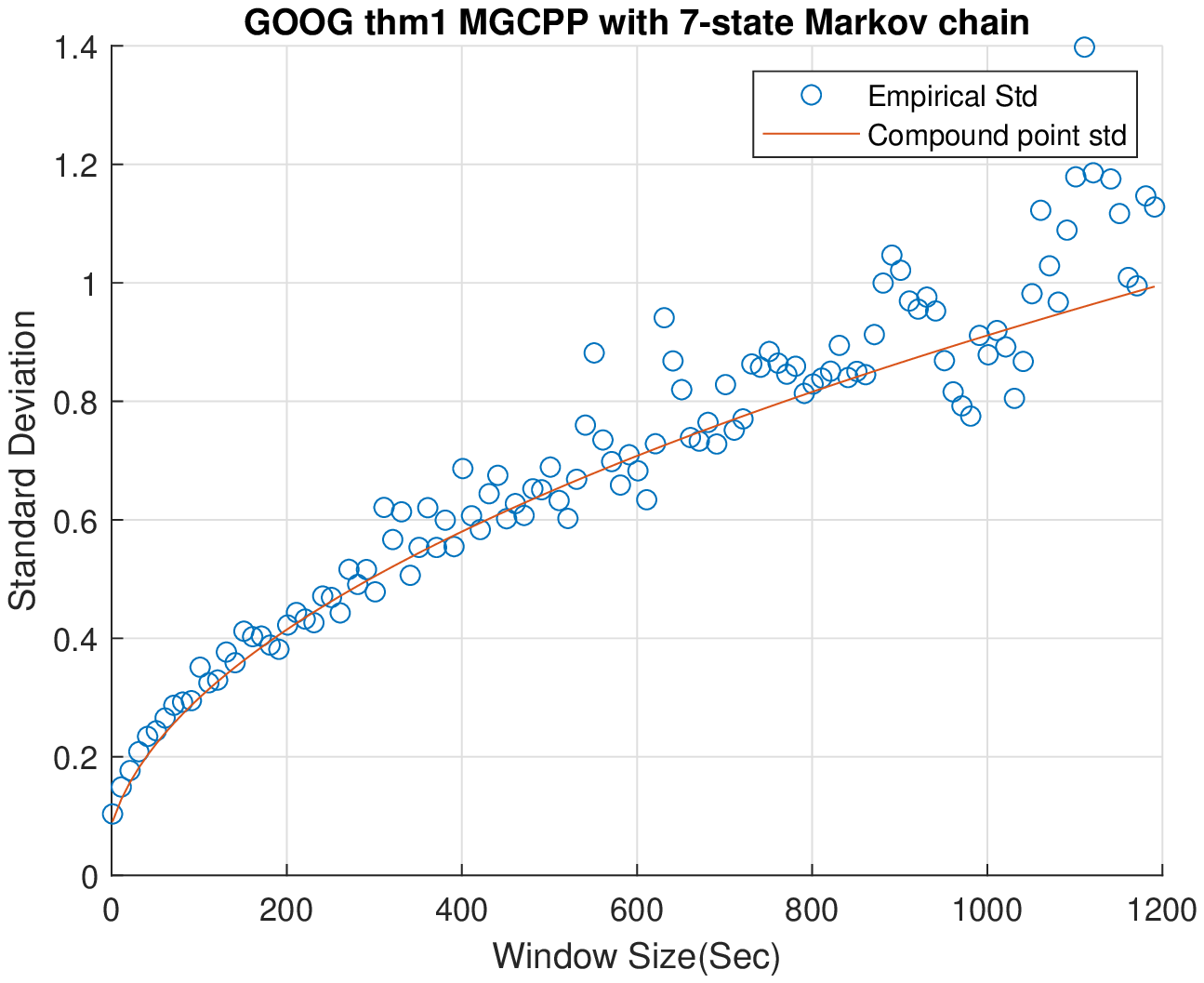}
		\label{goog7}
	\end{minipage}
	\caption{Standard deviation comparisons for MGCPP with 2-state Markov chain and 7-state Markov chain simulated by Google's stock data}
	\label{stdgoog}
\end{figure}

\begin{figure}[H]
	\centering	
	% first group of figures
	\begin{minipage}{0.42\textwidth}
		\includegraphics[width=\linewidth]{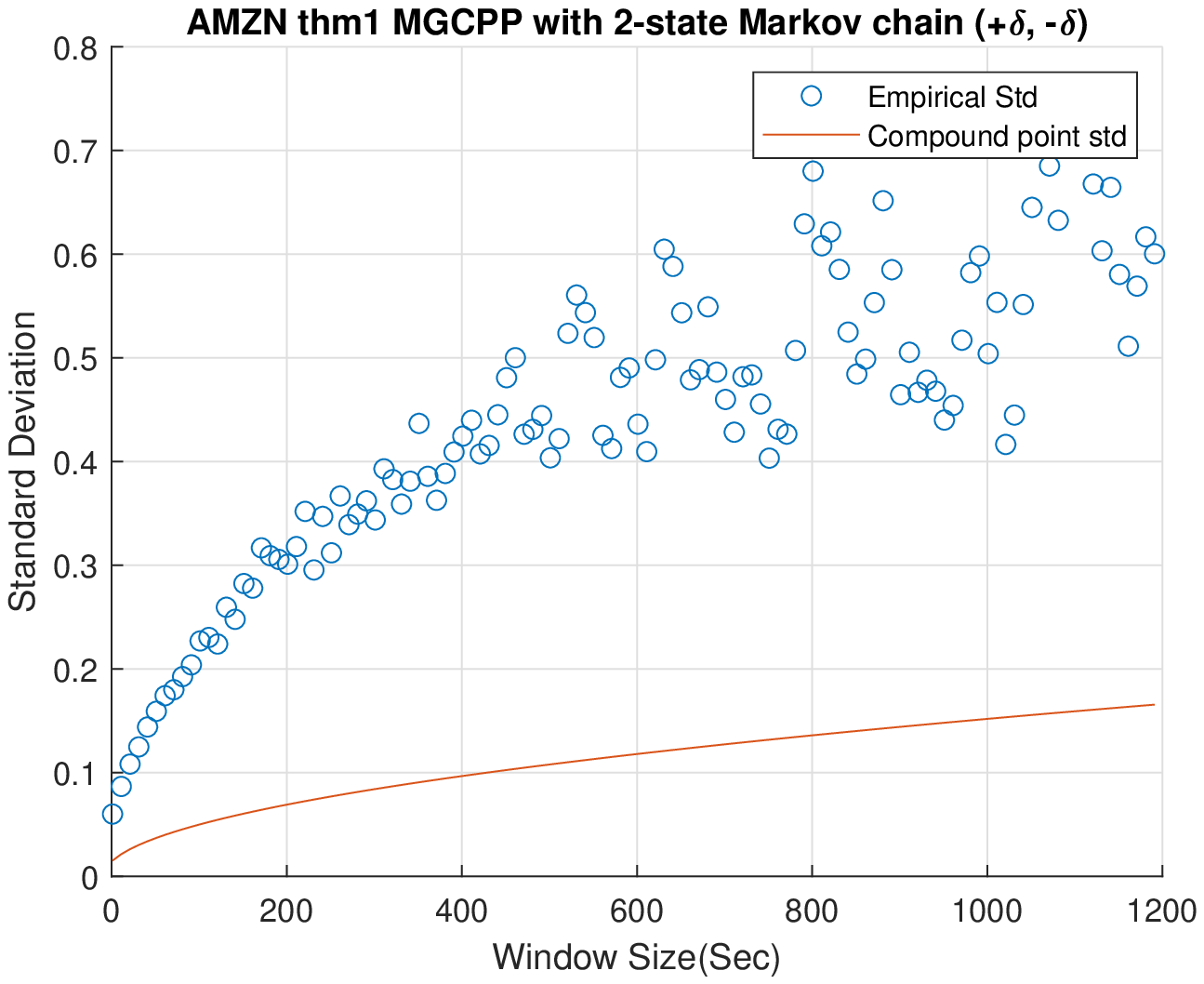}
		\label{amzn2}
	\end{minipage}
	\hspace{3mm} % choose horizontal spacing to suit your needs
	\begin{minipage}{0.42\textwidth}
		\includegraphics[width=\linewidth]{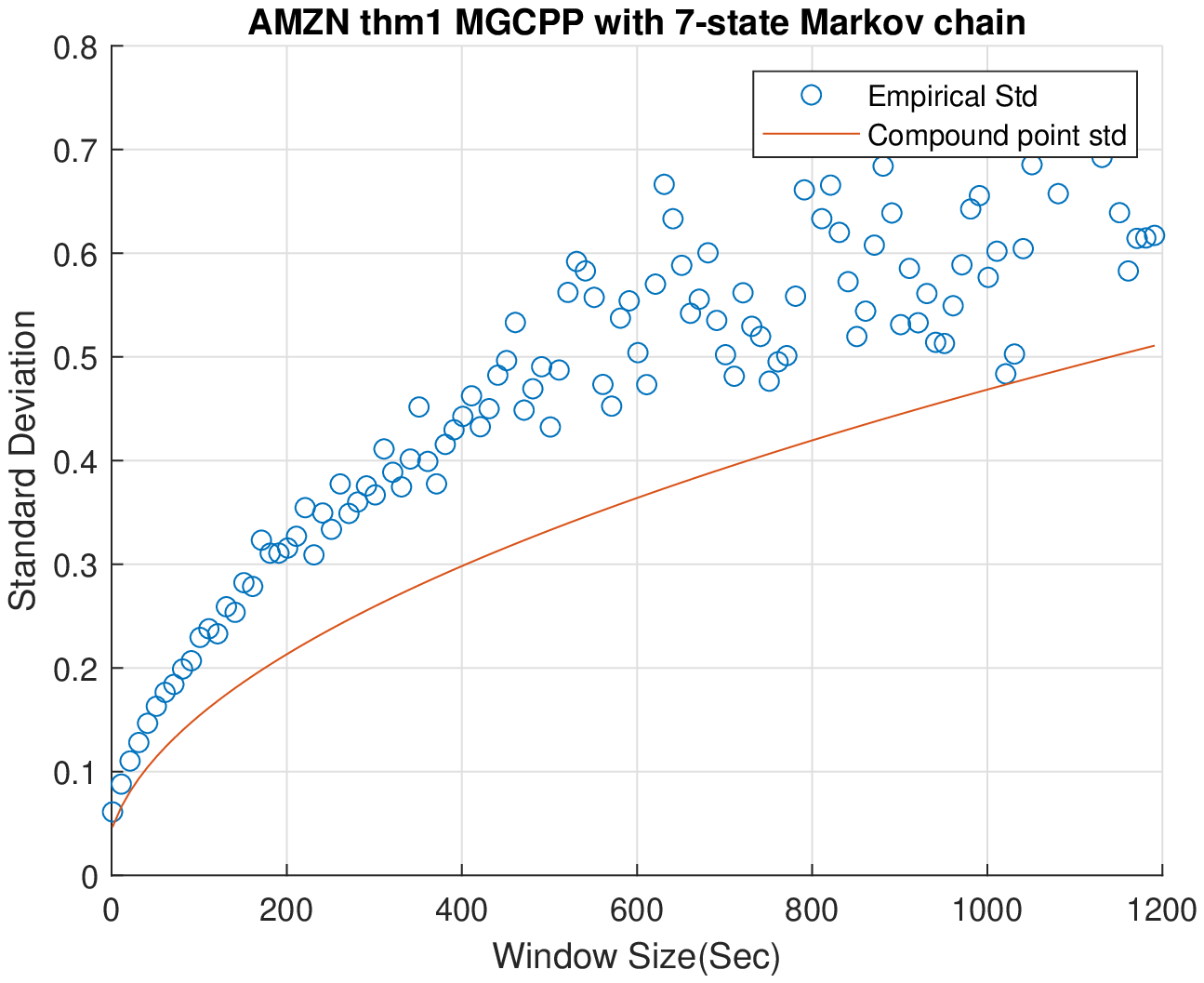}
		\label{amzn7}
	\end{minipage}
	\caption{Standard deviation comparisons for MGCPP with 2-state Markov chain and 7-state Markov chain simulated by Amazon's stock data}
	\label{stdamzn}
\end{figure}

Figure \ref{stdaapl}, \ref{stdgoog}, and \ref{stdamzn} give standard deviation comparisons for MGCPP with 2-state Markov chain and 7-state Markov chain simulated by different tickers' data. Since the 2-state simulation results here are not as good as the results simulated by Intel's and Microsoft's data, we take bigger time steps and window sizes (from 10 seconds to 20 minus with 10 seconds time step) to capture more dynamics. From the figures we can find that the 7-state model has a significant improvement than the 2-state model. 7-state curves for AAPL and GOOG are very close to the real standard deviation, although the theoretical curve of AMZN is underestimated even with the 7-state model.  

% Please add the following required packages to your document preamble:
% \usepackage{graphicx}
\begin{table}[H]
	\centering
	\caption{The MSE and coefficients computed by MGCPP with 2-state and 7-state Markov chain for different tickers. The regression coefficients were derived by fitting the real standard deviations with square root curve. And MGCPP coefficients were computed by equation (\ref{stdsqrt}).}
	\begin{tabular}{ccccc}
		\toprule
		\textbf{Ticker}       & \textbf{MSE}     & \textbf{Regression Ceofficient}  & \textbf{MGCPP Ceofficient}  & \textbf{Percentage Error}  \\ \midrule
		AAPL 2-state & 0.2467 & 0.0278                 & 0.0076            & $72.66\%$       \\ 
		AAPL 7-state & 0.0064 & 0.0311                 & 0.0288            & $7.40\%$         \\ 
		GOOG 2-state & 0.4161 & 0.0307                 & 0.0044            & $85.67\%$       \\ 
		GOOG 7-state & 0.0081 & 0.0307                 & 0.0287            & $6.51\%$       \\ 
		AMZN 2-state & 0.1233 & 0.0189                 & 0.0048            & $74.60\%$       \\ 
		AMZN 7-state & 0.0225 & 0.0205                 & 0.0147            & $28.29\%$        \\ \bottomrule
	\end{tabular}
	\label{coeffnstate}
\end{table}
The Table \ref{coeffnstate} lists the MSE and coefficients of the 2-state and 7-state models with different tickers. We can find the improvement of 7-state model quantitatively from the Table. The results of AAPL and GOOG are good enough for the mid-price modeling. As for AMZN, although we derive a remarkable improvement from 2-state model ($74.60\%$ error) to 7-state model ($28.29\%$ error), we cannot make the error smaller than $5\%$ or $10\%$. This is to say, MGCPP model may not be able to capture the full dynamics for AMZN data, but it still can be a strong candidate for modeling the mid-price, which is consistent with the conclusion of MGCHP model in \cite{Swishchuk2020}.  

In general, we can conclude that: as a generalization of MGCHP, the MGCPP model also has a very good performance in mid-price dynamics modeling. If we consider the MGCPP with higher states Markov chain, we will derive a better result.

\begin{rem}
	The MGCPP is not only a generalization of MGCHP, but also a generalization for all multivariate compound models whose point processes $\Vec{N_t}$ satisfy the assumptions \ref{as1} and \ref{as2}. The reason we use Hawkes process for comparison is we want to take the advantage of numerical examples in references. 
\end{rem}

\section{Diffusion limit for the MGCPP: Deterministic Centralization}\label{sec4}

We proved a LLN and FCLT for the MGCPP in the previous Section. And the limit theorems provide us an approximation for the mid-price modeling in the LOB. Recall the approximation in Remark \ref{aprem}, we have 
\begin{equation} \label{fcltap1}
\vec{S}_{nt}  \sim \Tilde{\sigma}^* \Lambda^{1/2} \Vec{W}(t)\sqrt{n} + \Tilde{a^*} \vec{N}_{nt},
\end{equation}
where the $\vec{S}_{nt}$ is the price process and $\vec{N}_{nt}$ is the order flow. However, in the real-world problems, equation (\ref{fcltap1}) cannot help us with the forecasting task directly because we couldn't have the order flow $\vec{N}_{nt}$ in advance. This motivates us to consider a FCLT \RomanNumeralCaps{2} for the MGCPP in this Section.

\subsection{FCLT for MGCPP: Deterministic Centralization}
\begin{theorem} \label{fclt2t} (FCLT \RomanNumeralCaps{2}: Deterministic Centralization). Let $X_{i,k}, \, i=1,2,\cdots,d$ be independent ergodic Markov chains with $n$ states $\{1,2,\cdots,n\}$ and with ergodic probabilities $\left(\pi_{i,1}^{*}, \pi_{i,2}^{*}, \ldots, \pi_{i,n}^{*}\right)$. Let $\vec{S}_{nt}$ be $d$-dimensional compound point process, we have 
	\begin{equation} \label{FCLT2}
	\frac{\vec{S}_{nt} - \Tilde{a^*}  E(\vec{N}_{nt})}{\sqrt{n}} \longrightarrow \Tilde{\sigma}^* \Lambda^{1/2} \Vec{W}_1(t) + \Tilde{a^*} \Sigma^{1/2}\Vec{W}_2(t), \, for\, all\, t>0 
	\end{equation}   
	as $n \rightarrow \infty$, where $\Vec{W}_1(t)$ and $\Vec{W}_2(t)$ are independent standard $d$-dimensional Brownian motions. Parameters $\Tilde{\sigma}^*$, $\Tilde{a^*}$, $\Lambda$, and $\Sigma$ are defined in Theorem \ref{FCLT1}.
\end{theorem}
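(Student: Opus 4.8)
The plan is to run the same algebraic decomposition as in the proof of Theorem~\ref{FCLT1}, but now centering by the deterministic mean $E(\vec N_{nt})$ rather than the random vector $\vec N_{nt}$. With $a_i^*=\sum_{k\in X_i}\pi_{i,k}^*a_i(X_{i,k})$, write
\[
S_{i,nt}-a_i^*E(N_{i,nt})=S_{i,0}+\sum_{k=1}^{N_{i,nt}}\bigl(a_i(X_{i,k})-a_i^*\bigr)+a_i^*\bigl(N_{i,nt}-E(N_{i,nt})\bigr),
\]
so that, after dividing by $\sqrt n$, the constant $S_{i,0}/\sqrt n\to 0$ and there remain a \emph{mark} fluctuation $n^{-1/2}\sum_{k=1}^{N_{i,nt}}(a_i(X_{i,k})-a_i^*)$ and an \emph{order-flow} fluctuation $a_i^*\,n^{-1/2}(N_{i,nt}-E(N_{i,nt}))$. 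The mark term is exactly the object analyzed in the proof of Theorem~\ref{FCLT1}: via the partial sums $R_{i,n}^*$ and the interpolated process $U_{i,n}^*$, the martingale CLT for the ergodic chain $X_i$ gives $U_{i,n}^*(\cdot)\Longrightarrow\sigma_i^*W_{1,i}(\cdot)$ in the Skorokhod topology, and composing with the time change $t\mapsto N_i(nt)/n$, which by Assumption~\ref{as1} converges almost surely to the deterministic $\bar\lambda_i t$, gives $U_{i,n}^*(N_i(nt)/n)\Longrightarrow\sigma_i^*\sqrt{\bar\lambda_i}\,W_{1,i}(t)$. The order-flow term converges, directly by Assumption~\ref{as2}, to $a_i^*(\Sigma^{1/2}\vec W_2)_i(t)$.

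The heart of the argument is to upgrade these two convergences to a \emph{joint} one and to conclude that the limiting Brownian motions $\vec W_1$ and $\vec W_2$ are independent. For this I would use the standing hypothesis that the driving chains $\{X_{i,k}\}$ are independent of the point process $\vec N_t$: the triple
\[
\bigl(\,U_{i,n}^*(t),\ N_i(nt)/n,\ n^{-1/2}(\vec N_{nt}-E(\vec N_{nt}))\,\bigr),\qquad t\in[0,1],
\]
then has a product law, its first coordinate being a functional of the chains alone and its last two coordinates functionals of $\vec N$ alone. Since each coordinate converges marginally --- the first to $\sigma_i^*W_{1,i}(\cdot)$, the second almost surely to $\bar\lambda_i t$, the third to $\Sigma^{1/2}\vec W_2(t)$ --- the triple converges jointly to $\bigl(\sigma_i^*W_{1,i}(t),\ \bar\lambda_i t,\ \Sigma^{1/2}\vec W_2(t)\bigr)$ with $\vec W_1$ independent of $\vec W_2$. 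In particular the $\sqrt n$-scale fluctuations of $\vec N$ about its mean do not enter the limit of the time-changed mark term, since that limit feels $N_i(nt)/n$ only through its deterministic a.s.\ limit. Applying the continuous-mapping / random-time-change theorem to the map $(f,u,h)\mapsto f\circ u+a_i^*h$, which is continuous at this limit point because $u(t)=\bar\lambda_i t$ is continuous and strictly increasing and $\sigma_i^*W_{1,i}$ is continuous, yields
\[
\frac{S_{i,nt}-a_i^*E(N_{i,nt})}{\sqrt n}\ \Longrightarrow\ \sigma_i^*\sqrt{\bar\lambda_i}\,W_{1,i}(t)+a_i^*(\Sigma^{1/2}\vec W_2)_i(t).
\]

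To finish, I would stack the $d$ coordinates and rewrite the limit in matrix form: since $\Tilde{\sigma}^*=\mathrm{diag}(\sigma_i^*)$ and $\Lambda^{1/2}=\mathrm{diag}(\sqrt{\bar\lambda_i})$ the first summand becomes $\Tilde{\sigma}^*\Lambda^{1/2}\vec W_1(t)$, and since $\Tilde{a^*}=\mathrm{diag}(a_i^*)$ and $\Sigma^{1/2}=\mathrm{diag}(\sigma_i)$ the second becomes $\Tilde{a^*}\Sigma^{1/2}\vec W_2(t)$, which is precisely (\ref{FCLT2}). I expect the main obstacle to be exactly the joint-convergence-with-independence step: one has to make explicit, and then use, that the marks are independent of the counting process --- without it $\vec W_1$ and $\vec W_2$ need not be independent --- and one must check that the random time change inside the mark term is harmless because it stabilizes to a deterministic limit. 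The remaining ingredients, namely the Slutsky step absorbing $S_{i,0}/\sqrt n$ and the continuity of Skorokhod composition when the inner map has a continuous strictly increasing limit, are routine.
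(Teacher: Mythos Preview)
Your decomposition is exactly the one the paper uses: write
\[
\frac{\vec S_{nt}-\Tilde{a^*}E(\vec N_{nt})}{\sqrt n}
=\frac{\vec S_{nt}-\Tilde{a^*}\vec N_{nt}}{\sqrt n}
+\Tilde{a^*}\,\frac{\vec N_{nt}-E(\vec N_{nt})}{\sqrt n},
\]
invoke Theorem~\ref{FCLT1} on the first summand and Assumption~\ref{as2} on the second, and add the limits. Where you go further than the paper is precisely on the point you flag as the main obstacle: the paper simply \emph{asserts} that the two limiting Brownian motions are independent (``we assume two multivariate Brownian motions \ldots\ are mutually independent''), without arguing joint convergence, whereas you supply a product-law argument based on the independence of the chains $\{X_{i,k}\}$ from $\vec N$ together with a continuous-mapping/random-time-change step. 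So your proposal follows the paper's route but is strictly more careful on the one step the paper leaves unjustified.
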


\begin{proof}[Proof of Theorem \ref{fclt2t}]
	Recall the FCLT for MPP (assumption \ref{as2}), we have
	\begin{equation}\label{PPfclt}
	\bigg(  \frac{1}{\sqrt{n}} \vec{N}_{nt} -\frac{1}{\sqrt{n}} E(\Vec{N}_{nt}) \bigg) \longrightarrow \mathbf{\Sigma}^{1/2}\Vec{W}_t
	\end{equation}
	in law for the Skorokhod topology, as $n \rightarrow \infty$. And from theorem \ref{FCLT1}, we have the FCLT  for MGCPP
	\begin{equation} \label{fclt1r}
	\frac{\vec{S}_{nt} - \Tilde{a^*} \vec{N}_{nt}}{\sqrt{n}} \longrightarrow \Tilde{\sigma}^* \Lambda^{1/2} \Vec{W}_t, \, for\, all\, t>0 
	\end{equation}   
	as $n \rightarrow \infty$ in the weak law of Skorokhod topology. Here, we assume two multivariate Brownian motions in (\ref{PPfclt}) and (\ref{fclt1r}) are mutually independent and we refer them $\vec{W}_2(t)$ and $\vec{W}_1(t)$. Next, consider
	\begin{equation}
	\frac{\vec{S}_{nt}}{\sqrt{n}} -  \frac{\Tilde{a}^*   E(\Vec{N}_{nt})}{\sqrt{n}}  = \frac{\vec{S}_{nt} - \Tilde{a}^* \vec{N}_{nt}}{\sqrt{n}}  +  \Tilde{a}^* \bigg(  \frac{1}{\sqrt{n}} \vec{N}_{nt} -\frac{1}{\sqrt{n}}E(\Vec{N}_{nt}) \bigg).
	\end{equation}   
	With (\ref{PPfclt}) and (\ref{fclt1r}) we can derive
	\begin{equation}
	\frac{\vec{S}_{nt} - \Tilde{a}^* \vec{N}_{nt}}{\sqrt{n}}  +  \Tilde{a}^* \bigg(  \frac{1}{\sqrt{n}} \vec{N}_{nt} -\frac{1}{\sqrt{n}}E(\Vec{N}_{nt}) \bigg) \longrightarrow \Tilde{\sigma}^* \Lambda^{1/2} \Vec{W}_1(t) + \Tilde{a^*} \Sigma^{1/2}\Vec{W}_2(t)
	\end{equation} 
	as $n \rightarrow \infty$ which gives (\ref{FCLT2}).
\end{proof}

\begin{rem}
	We can also consider a special case as the FCLT \RomanNumeralCaps{1}. Let $X_{i,k}$ be a Markov chain with two dependent states $(+\delta,-\delta)$ and the ergodic probabilities are $\left(\pi_{i}^{*}, 1-\pi_{i}^{*}\right)$. Set $a_i(x) = x$ in the definition \ref{priceprocess}. Then, we can derive a similar result for FCLT \RomanNumeralCaps{2}. Parameters $\Tilde{a}^*$ and $\Tilde{\sigma}^*$ can be computed by equation (\ref{sigmastar}).
\end{rem}

\begin{rem} \label{rem3}
	For the FCLT \RomanNumeralCaps{2}, we can also consider a similar approximation as the FCLT \RomanNumeralCaps{1}. For some large enough $n$, we have
	\begin{equation} \label{FCLTap2}
	\vec{S}_{nt}  \sim \sqrt{n} \Tilde{\sigma}^* \Lambda^{1/2} \Vec{W}_1(t) + \sqrt{n} \Tilde{a^*} \Sigma^{1/2}\Vec{W}_2(t) + \Tilde{a^*}  E(\vec{N}_{nt}), \, \text{for all $t>0$}. 
	\end{equation}  
	To deal with the $E(\vec{N}_{nt})$ term, we consider the approximation derived from assumption \ref{as1} in equation (\ref{expnt}):
	\begin{equation}
	E(\Vec{N}(nt)) \sim nt\Vec{\bar{\lambda}}.
	\end{equation}	
	Rewrite equation (\ref{FCLTap2}), we have the new approximation
	\begin{equation} \label{FCLTap3}
	\vec{S}_{nt}  \sim \sqrt{n} \Tilde{\sigma}^* \Lambda^{1/2} \Vec{W}_1(t) + \sqrt{n} \Tilde{a^*} \Sigma^{1/2}\Vec{W}_2(t) + \Tilde{a^*}  nt\Vec{\bar{\lambda}}.
	\end{equation}	
\end{rem}

\subsection{Numerical Examples for FCLT: Deterministic Centralization}

In this Section, we applied the LOBSTER data to test the FCLT \RomanNumeralCaps{2}. According to the numerical examples of FCLT \RomanNumeralCaps{1}, we consider the standard deviation of the approximation in Remark \ref{rem3}, namely 

\begin{equation} 
\operatorname{std}\left\{\vec{S}_{(i+1) n t}-\vec{S}_{i n t}\right\}  \approx \sqrt{ (\Tilde{\sigma}^*)^2 \Lambda n\vec{t}  + (\Tilde{a^*})^2 \Sigma n\vec{t}}.
\end{equation}

The comparisons of real standard deviation and theoretical standard deviation can be found in Figure \ref{stdfclt2}. Since results of INTC and MSFT are good enough with the $2$-state Markov chain $(+\delta, -\delta)$ in FCLT \RomanNumeralCaps{1}, we also applied $2$-state Markov chain for INTC and MSFT here. As for AAPL, GOOG, and AMZN, we used the MGCPP model with $7$-state Markov chain. Window sizes here start from 1 second and increase to 20 minutes in time steps of 10 seconds. As can be seen in Figure \ref{stdfclt2}, the results for FCLT \RomanNumeralCaps{2} are as good as the FCLT \RomanNumeralCaps{1} results in Figure \ref{std}, \ref{stdaapl}, \ref{stdgoog}, and \ref{stdamzn}. We also computed the MSE and coefficients in Table \ref{coeff2}.

\begin{figure}[H]
	\centering
	\includegraphics[width=.4\textwidth]{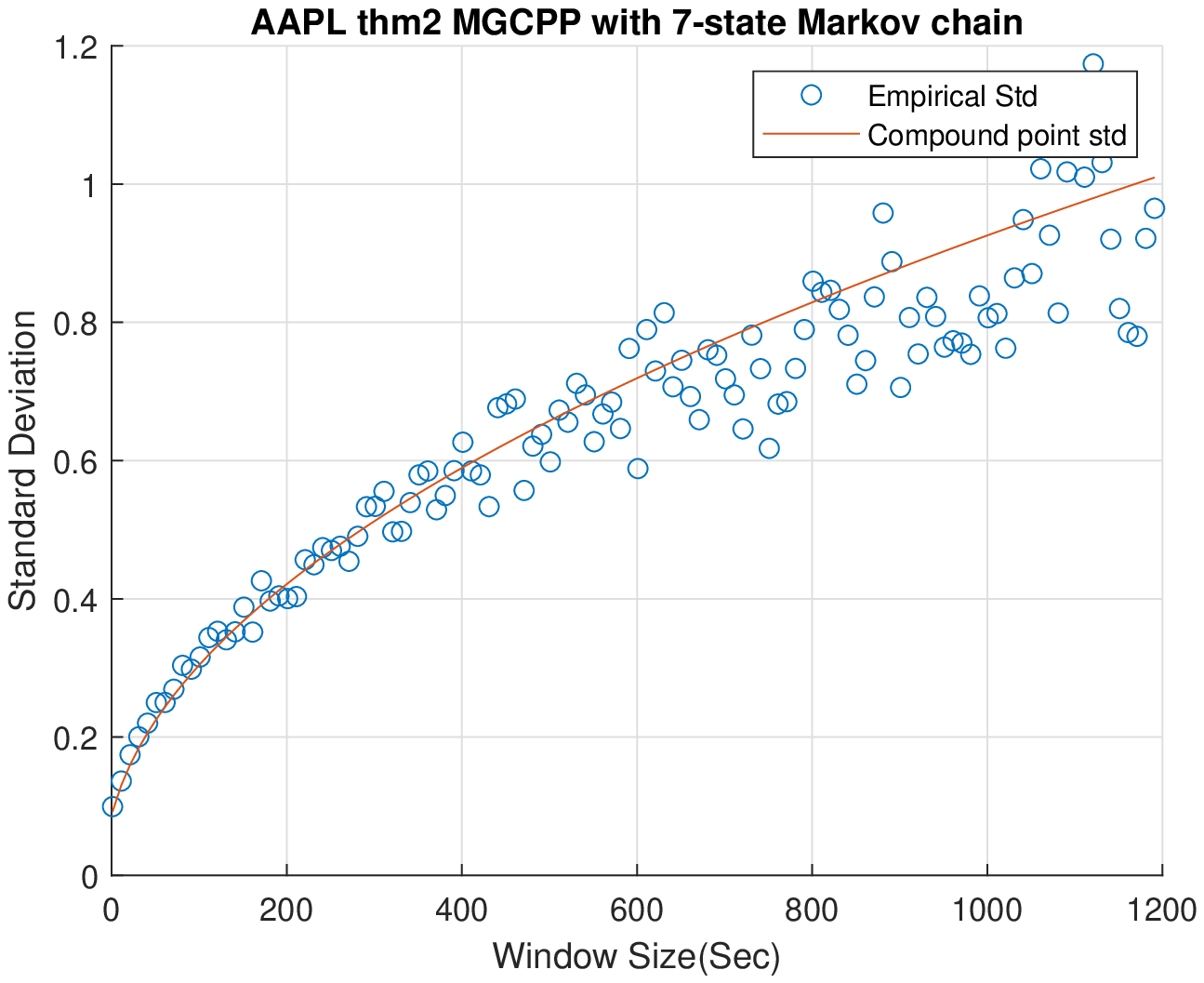}\quad
	\includegraphics[width=.4\textwidth]{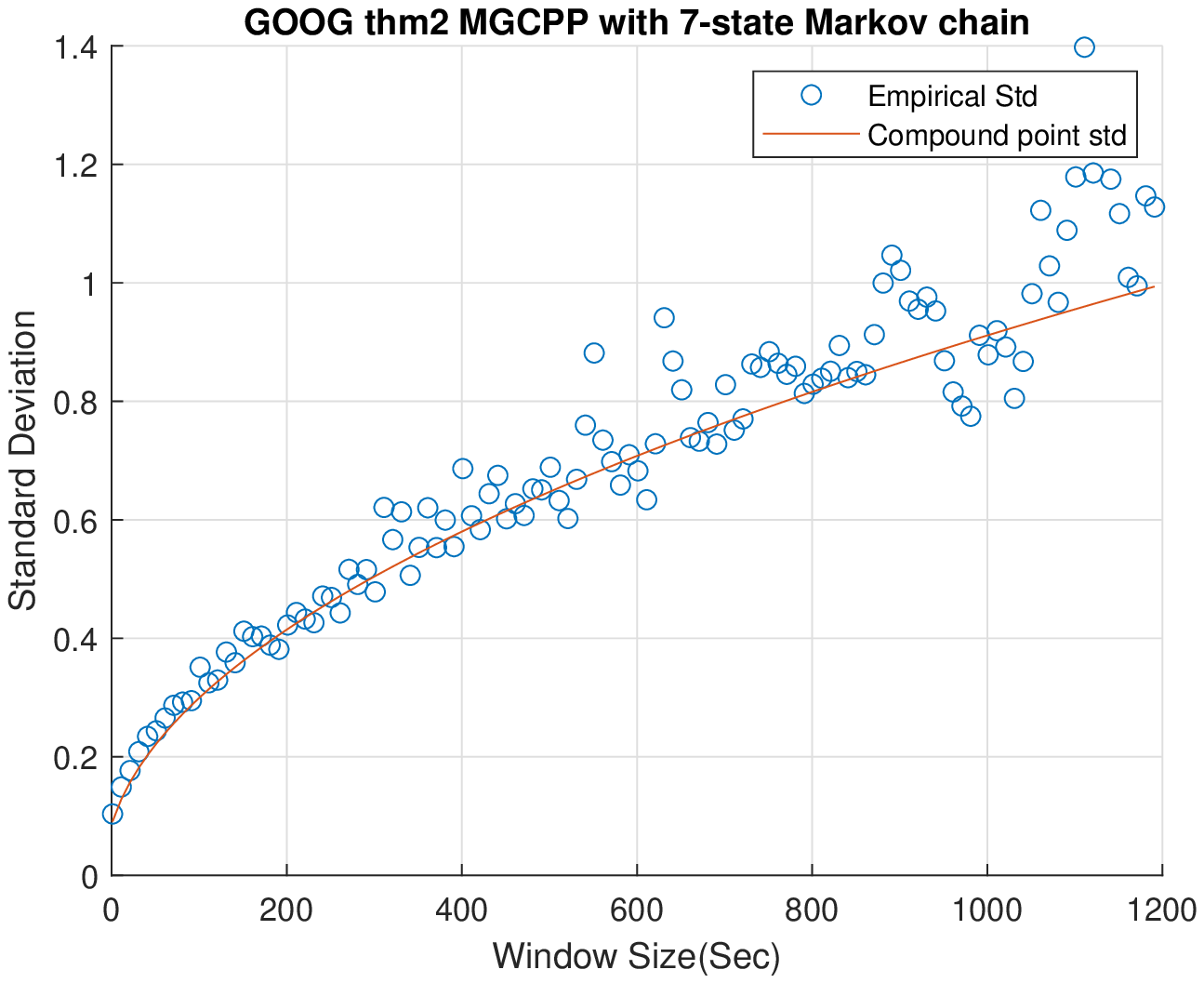}
	\medskip
	\includegraphics[width=.4\textwidth]{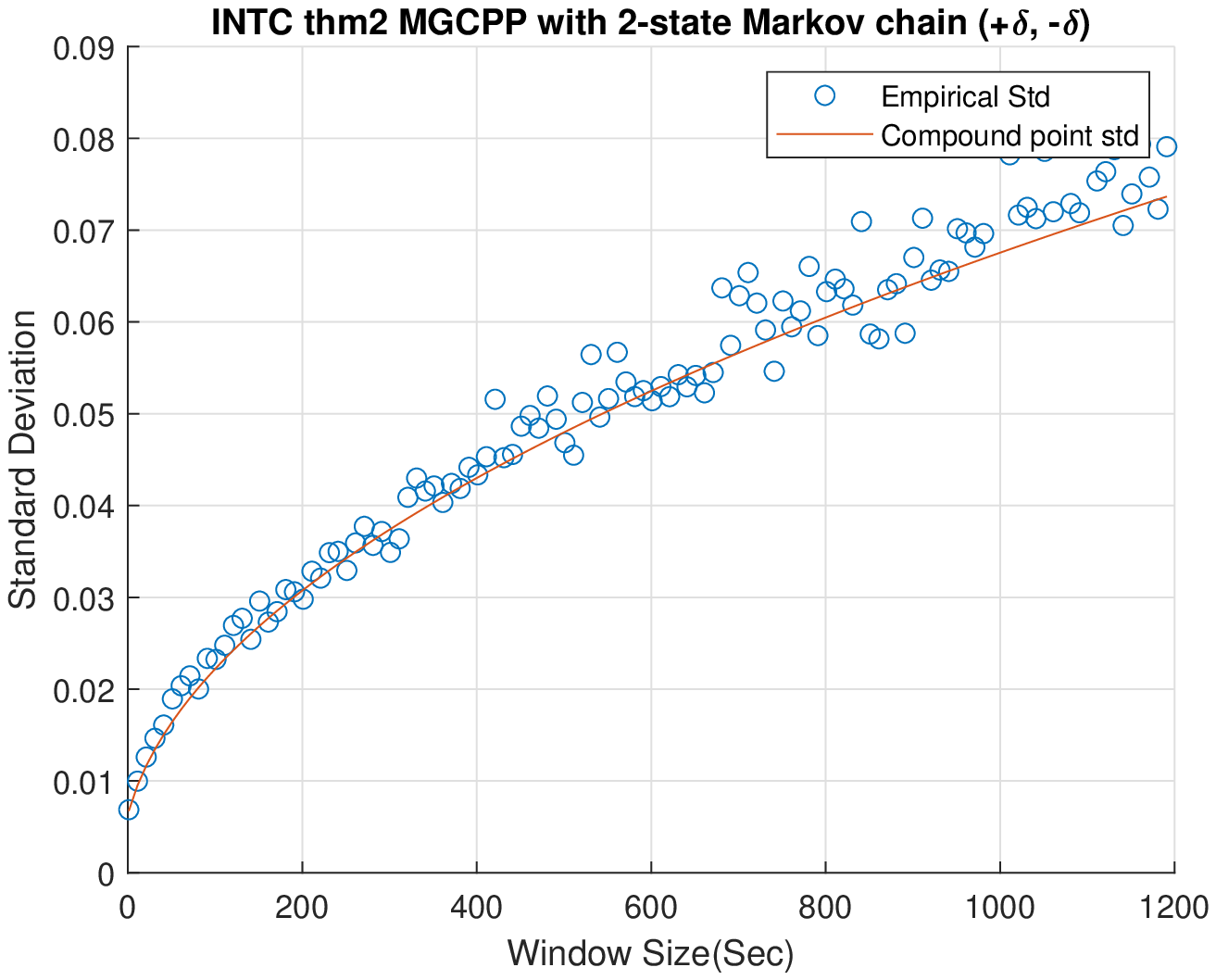}\quad
	\includegraphics[width=.4\textwidth]{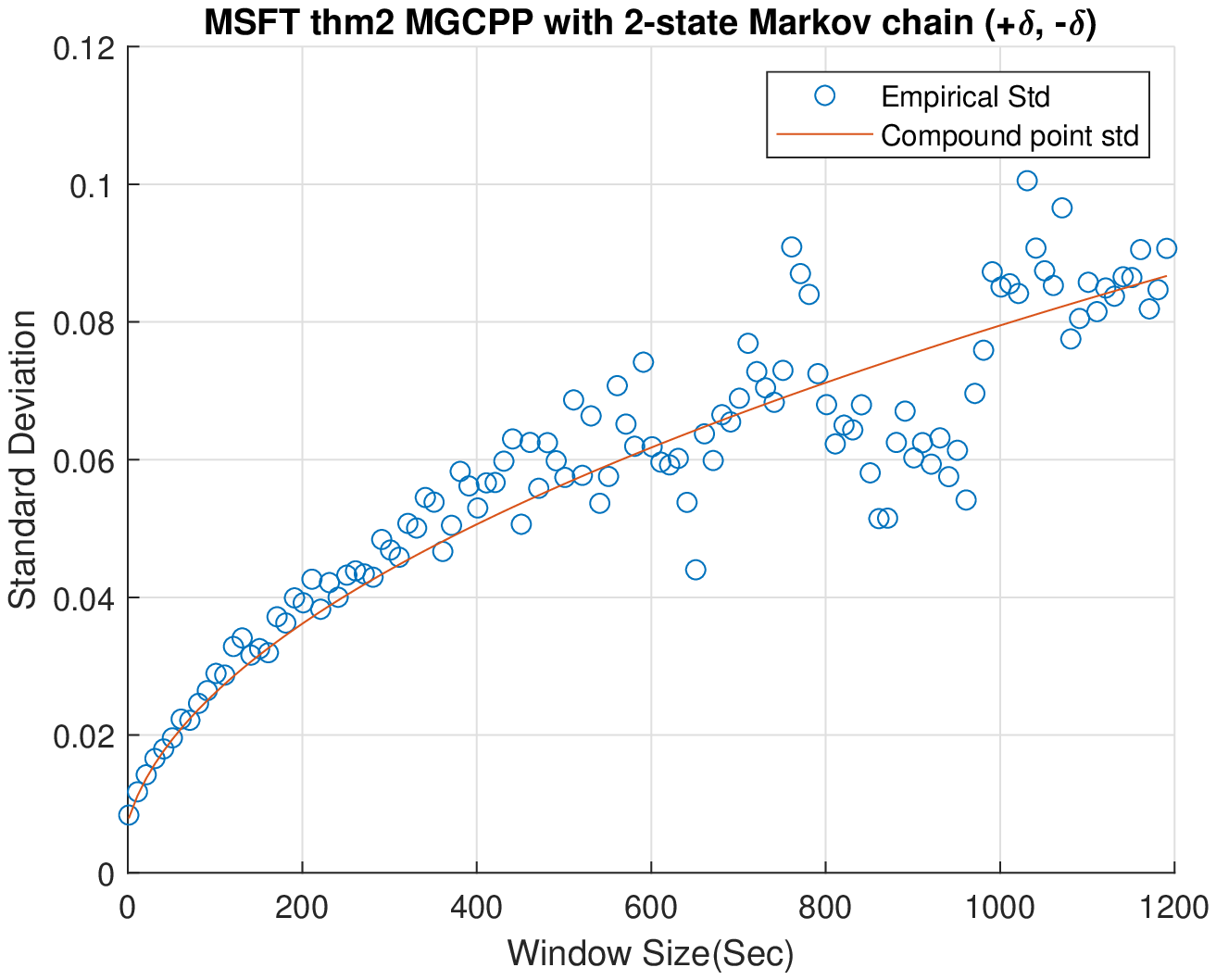}
	\medskip
	\includegraphics[width=.4\textwidth]{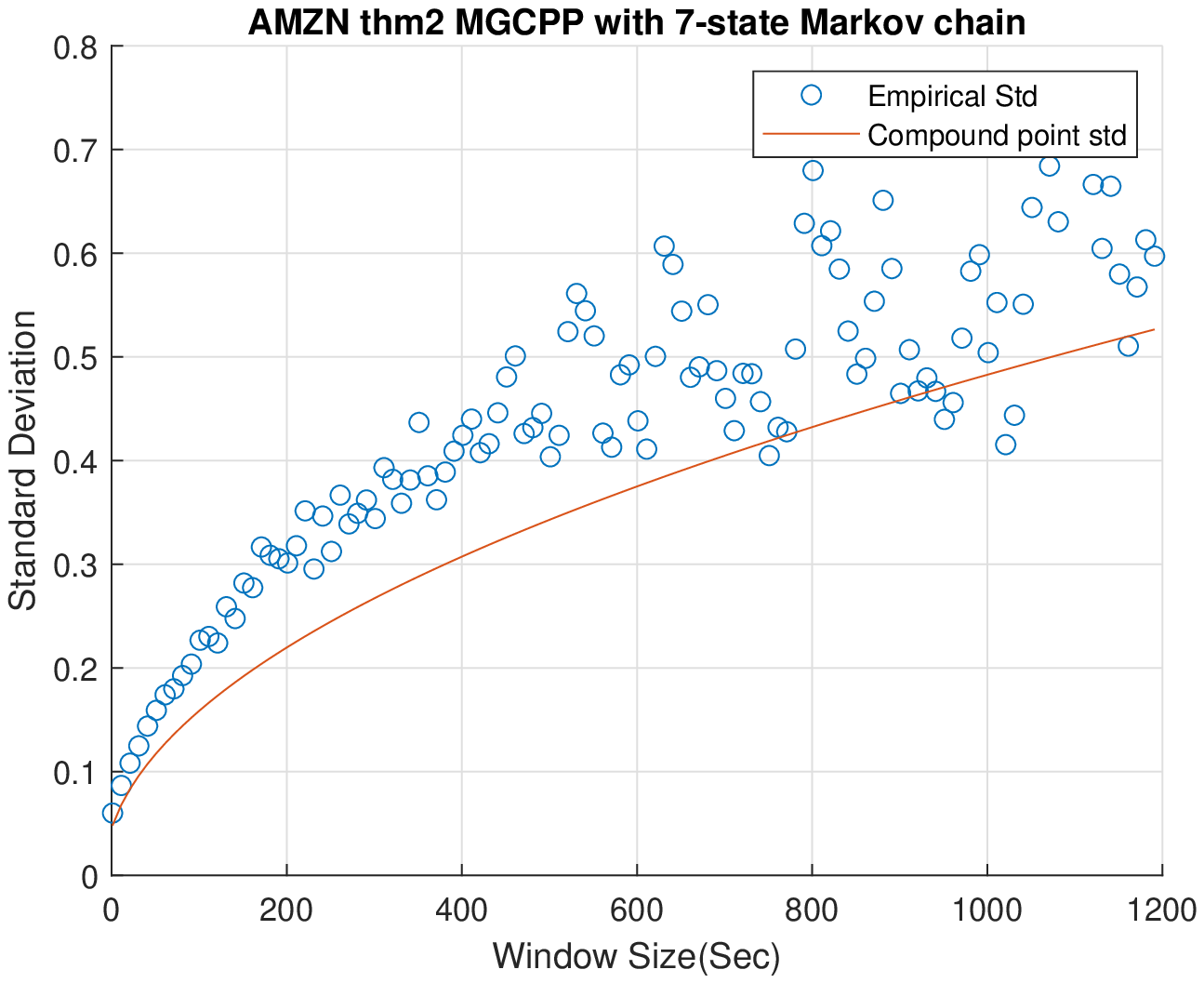}
	\caption{Standard deviation comparisons for 5 stocks by FCLT \RomanNumeralCaps{2} for the MGCPP. INTC and MSFT are simulated with $2$-state Markov chain while AAPL, AMZN, and GOOG are using $7$-state Markov chain.}
	\label{stdfclt2}
\end{figure}

\begin{table}[H]
	\centering
	\caption{The MSE and coefficients computed by MGCPP FCLT \RomanNumeralCaps{2}.}
	\begin{tabular}{ccccc}
		\toprule
		\textbf{Ticker}       & \textbf{MSE}         & \textbf{Regression Ceofficient}  & \textbf{MGCPP Ceofficient}  & \textbf{Percentage Error}  \\ \midrule
		INTC 2-state & 1.5820$\times 10^{-5}$ & 0.0022                 & 0.0021            & $6.27\%$         \\ 
		MSFT 2-state & 6.5788e$\times 10^{-5}$ & 0.00252                & 0.00249           & $0.94\%$         \\ 
		AAPL 7-state & 0.0060     & 0.0278                 & 0.0288            & $3.56\%$         \\ 
		GOOG 7-state & 0.0081     & 0.0307                 & 0.0287            & $6.67\%$         \\ 
		AMZN 7-state & 0.0121     & 0.0189                 & 0.0147            & $22.14\%$        \\  \midrule
		Overall Percentage Error & \multicolumn{4}{c}{$7.92\%$}                             \\ \bottomrule
	\end{tabular}
	\label{coeff2}
\end{table} 

We see that the percentage errors of MSFT and AAPL are very small (less than $5\%$) and the results of INTC and GOOG are also good (less than $10\%$). The percentage error of AMZN is large, but it is still smaller than the error derived from FCLT \RomanNumeralCaps{1} in Table \ref{coeffnstate}. In general, the simulation results of FCLT \RomanNumeralCaps{2} is as good as the FCLT \RomanNumeralCaps{1} and we can apply this FCLT \RomanNumeralCaps{2} to model a mid-price.

\subsection{Rolling Cross-Validation}

In this Section, we tested the forecast ability of the MGCPP model. Since we didn't assume the multivariate point process $\vec{N}_t$ is stationary or independent, we cannot apply the $K$-fold cross-validation directly. Here, we used the rolling $K$-fold cross-validation method which proposed in \cite{Bergmeir2018}. We divided the last $50$ minutes' data into 5 disjoint $10$-min windows for each stock. For the fold $1$, We take the first $280$ minutes' data as the training set to estimate parameters. And then, we applied the data in the next $10$-min window to calculate the percentage error. Next, we merge the test set into the training set in fold $1$ as the new training set in fold $2$ and apply the next $10$-min window as a new test set. Repeat this procedure 5 times, we will get 5 percentage errors. The mean value of the 5 percentage errors will be the test error $E$ for this stock. So, the overall test error for our multivariate model is the average of all test errors. Figure \ref{rolling} gives an example diagram for the rolling cross-validation. 

\begin{figure}[H]
	\includegraphics[width=\linewidth]{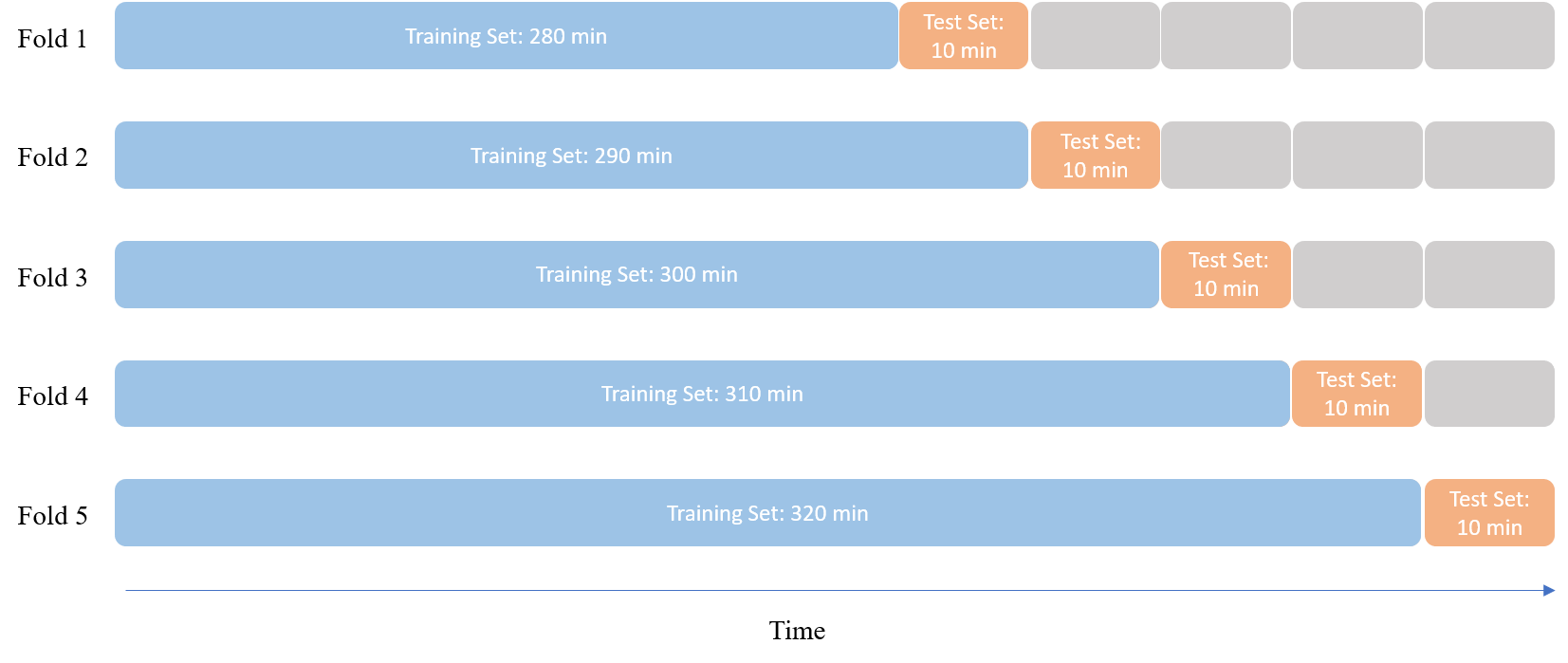}
	\caption{Diagram for the Rolling cross-validation.}
	\label{rolling}
\end{figure}

\begin{table}[H]
	\centering
	\caption{Test Errors for different tickers by applying $5$-fold cross-validation. The errors are percentage errors between regression coefficients and the MGCPP coefficients.}
	\begin{tabular}{ccccccc}
		\toprule
		Ticker             & fold 1    & fold 2    & fold 3    & fold 4    & fold 5    & Mean Error \\ \midrule
		INTC               & $6.75\%$  & $0.39\%$  & $3.16\%$  & $14.32\%$ & $16.60\%$ & $8.24\%$   \\ 
		MSFT               & $20.33\%$ & $31.35\%$ & $16.96\%$ & $8.33\%$  & $22.61\%$ & $19.92\%$  \\ 
		AAPL               & $8.22\%$  & $0.51\%$  & $22.53\%$ & $21.34\%$ & $23.33\%$ & $15.01\%$  \\ 
		GOOG               & $19.60\%$ & $20.41\%$ & $16.41\%$ & $6.13\%$  & $12.51\%$ & $15.19\%$  \\ 
		AMZN               & $20.78\%$ & $4.87\%$  & $7.98\%$  & $18.81\%$ & $42.15\%$ & $18.92\%$  \\ \midrule
		Overall Test Error & \multicolumn{6}{c}{$E_{test}=15.46\%$}                               \\ \bottomrule
	\end{tabular}
	\label{foreerror}
\end{table}

Table \ref{foreerror} lists test errors for different tickers and the overall test error for the MGCPP model. As can be seen from the Table, the test error for each stock is relatively large and the overall test error $(15.46\%)$ is nearly double the overall percentage error $(7.92\%)$ in Table \ref{coeff2}. That's because the results in Table \ref{coeff2} is a fitting error while the test errors in Table \ref{foreerror} is a kind of forecast error. We didn't apply any future information when we conduct the forecast task. So, even the $15.46\%$ overall test error is not as good as the fitting one, it is still a good prediction in the LOB and can provide lots of insights in the forecast task.

\section{Conclusion and future work}\label{sec5}

In this paper, we proposed a multivariate general compound point process for the mid-price modeling in limit order book. This kind of process is a generalization of several stochastic models in the limit order market. We applied LOBSTER data to conduct simulations and found the multivariate generalized model is as good as the general compound Hawkes process model. We also tested the prediction ability of this kind of process. In general, the MGCPP performs very good in LOB modeling and it can be a meaningful reference in the mid-price prediction. In the future, we will explore more applications of the MGCPP and consider related option pricing problems under this kind of frame work.

%\bibliographystyle{unsrt}  
%\bibliography{references}  %%% Remove comment to use the external .bib file (using bibtex).
%%% and comment out the ``thebibliography'' section.

%%% Comment out this section when you \bibliography{references} is enabled.

\end{document}